\newtheorem{theorem}{Theorem}%[section]
\newtheorem{definition}{Definition}%[theorem]
\newtheorem{lemma}{Lemma}
\newtheorem{example}{Example}%[section]
\newtheorem{prop}{Proposition}%[theorem]
\def \mc #1{\mathcal{#1}}
\newcommand{\tN}{\tilde{\mathcal{N}}}
\newcommand{\Sp}[1]{\mathbf{v}(#1)}
\newcommand{\str}[1]{\text{sTr}\left[#1\right]}
\begin{document}

\preprint{APS/123-QED}

%\title{Theory of Error Mitigation}% Force line breaks with \\
%\thanks{A footnote to the article title}
\title{NISQ: Error Correction, Mitigation, and Noise Simulation}

\author{Ningping Cao}
\thanks{These authors contributed equally to this work.}
\affiliation{Institute for Quantum Computing and Department of Physics and Astronomy, University of Waterloo, Waterloo, ON N2L 3G1, Canada}
\author{Junan Lin}
\thanks{These authors contributed equally to this work.}
\affiliation{Institute for Quantum Computing and Department of Physics and Astronomy, University of Waterloo, Waterloo, ON N2L 3G1, Canada}
\author{David Kribs}
\affiliation{Department of Mathematics \& Statistics, University of Guelph, Guelph, ON N1G 2W1, Canada}
\author{Yiu-Tung Poon}
\affiliation{Department of Mathematics, Iowa State University, Ames, IA, USA 50011}
\author{Bei Zeng}
\email{zengb@ust.hk}
\affiliation{Department of Physics, The Hong Kong University of Science and Technology, Clear Water Bay, Kowloon, Hong Kong}
\affiliation{Institute for Quantum Computing and Department of Physics and Astronomy, University of Waterloo, Waterloo, ON N2L 3G1, Canada}
\author{Raymond Laflamme}
\affiliation{Institute for Quantum Computing and Department of Physics and Astronomy, University of Waterloo, Waterloo, ON N2L 3G1, Canada}

%\date{\today}

\date{\today}

\begin{abstract}
Error-correcting codes were invented to correct errors on noisy communication channels. 
Quantum error correction (QEC), however, has a wider range of uses, including information transmission, quantum simulation/computation, and fault-tolerance. 
These invite us to rethink about QEC, in particular, about the role that quantum physics plays in terms of encoding and decoding. 
The fact that many quantum algorithms, especially near-term hybrid quantum-classical algorithms, only use limited types of local measurements on quantum states, leads to various new techniques called Quantum Error Mitigation (QEM). 
This work examines the task of QEM from several perspectives.
Using some intuitions built upon classical and quantum communication scenarios, we clarify some fundamental distinctions between QEC and QEM.
We then discuss the implications of noise invertibility for QEM, and give an explicit construction called \textit{Drazin-inverse} for non-invertible noise, which is trace preserving while the commonly-used Moore-Penrose pseudoinverse may not be.
Finally, we study the consequences of having an imperfect knowledge about the noise, and derive conditions when noise can be reduced using QEM.

\end{abstract}

\maketitle

\tableofcontents

\section{Introduction}

The field of quantum information processing has entered an era featuring noisy, intermediate-scale quantum (NISQ) devices.
Despite some recent demonstrations of computational advantages compared to classical computers~\cite{Arute2019,Zhong2020}, NISQ devices still face significant challenges before eventually becoming practically useful.
In particular, noise in NISQ processors can spoil the computation process and possibly lead to incorrect final results.

Conventionally, the main tool for protecting the processor from noise has been quantum error correction (QEC).
QEC protocols are designed to allow a user to detect, and eventually correct, errors that happen during a quantum computation.
While many approaches for QEC have been developed, few have been tested on real quantum processors due to the significant demands on the hardware. 
First, QEC generally encodes quantum information into a much larger Hilbert space, which requires the hardware size to be large as well.
Second, quantum operations (gates) on a processor must be below a certain threshold value for QEC to successfully reduce the effective error, instead of introducing more errors.
Meeting both requirements is generally difficult on most state-of-the-art devices available today.

Recently, the field of quantum error mitigation (QEM) emerged with the goal of decreasing the effective noise level, while circumventing these two obstacles, on near term devices.
The idea is that if one has some knowledge about the noise processes in a particular hardware, then one should be able to utilize that knowledge to reduce (part of) the effect of that noise.
Importantly, it is more desirable to have protocols that introduce no or very little additional hardware overhead in order to improve the computation accuracy.
Numerous protocols have been developed during the past few years~\cite{Temme2017a,Endo2018,McArdle2019,Maciejewski2020,Koczor2021} that fall into this category.

The parallel development of both fields naturally leads to the question: under what circumstances should one apply QEC over QEM, and vice versa?
To answer this question, it is useful to clearly illustrate the differences between QEC and QEM.
Understanding these differences can be helpful during the experimental design stage, where an experimentalist decides which protocol to use to achieve a particular task.

In this work, we examine the relation between QEC and QEM from a high-level perspective, and discuss several implications when using QEM in practice.
After reviewing some common QEM protocols in \cref{sec_review_QEM}, in \cref{sec_QEC_vs_QEM} we illustrate the fundamental differences between QEC and QEM from the perspective of classical and quantum communication.
In \cref{sec_Drazin} we study implications of noise invertibilities in QEM, and illustrate how non-invertible noise, which is largely omitted in the literature, may arise in an experiment.
We propose a construction called Drazin-inverse for non-invertible noises, and prove that compared to a conventional choice of pseudoinverse (the Moore-Penrose pseudoinverse), the Drazin-inverse has the advantage of being trace preserving, which can have better stability when running computer simulations.
In \cref{sec_conseq} we study the consequences due to imperfect knowledge about actual noise, and give a sufficient condition for when an optimal QEM can improve the expectation value of any observable.

\section{Review of QEM protocols}\label{sec_review_QEM}
The goal of QEM protocols is usually to recover the ideal expectation value of some observable $A$ from the output state $\rho_{\text{out}}$, namely $\langle A \rangle = \Tr[A \rho_{\text{out}}]$.
These can be further classified into two categories~\cite{Yoshioka2021,Zhang2021}.
The first one, which we call error-based QEM, requires prior knowledge on the form of noise occurring in a quantum processor.
The second one does not require the particular form of noise being known, and will be called error-agnostic QEM.
This section aims to review some of the commonly-used protocols, with a focus on error-based QEM.
Our treatment will be based on the review article by Zhang et al.~\cite{Zhang2021} with some modifications.

\subsection{Extrapolation Methods}
Extrapolation-based QEM is among the earliest proposed protocols that fall under the QEM category~\cite{Temme2017a, Li2017}.
The intuition is that while it is in general difficult to reduce the physical error rate on a processor, \emph{increasing} the error rate in a controlled manner might be possible on some systems.
Once sufficiently many observations of the expectation value under different noise strengths have been obtained, it could be possible to infer the noiseless value from the noisy ones.
Consider a stochastic noisy physical process of the form
\begin{equation}
\mc{E} = (1-\epsilon) \mc{I} + \epsilon \mc{N}
\end{equation}
where $\mc{I}$ is the identity map, $\mc{N}$ is the noise map, and $\epsilon$ is a small error rate.
For a target quantum circuit with depth $n$, the ideal and noisy output states can be written as
\begin{equation}\label{eq_n_layer}
	\begin{gathered}
	\rho_\text{out}^\text{ideal} = \mathcal{U}_n\circ\cdots\circ\mathcal{U}_1(\rho_\text{in}),\\
	\rho_\text{out}^\text{exp} = \mc{N} \circ \mathcal{U}_n\circ\cdots\circ  \mc{N} \circ\mathcal{U}_1(\rho_\text{in}),
	\end{gathered}
\end{equation}
where we have made a simplifying assumption that the noise is Markovian and gate-independent.
The expectation value can be expanded in a series as
\begin{equation}
\langle A \rangle(\epsilon) = \langle A \rangle(0) + \sum_{m=0}^{k} A_{m} \epsilon^{m} +\mc{O}(\epsilon^{m+1}),
\end{equation}
where $A_{m}$ are expansion coefficients, and $\langle A \rangle(\epsilon)$ denotes the expectation value corresponding to an error rate of $\epsilon$.
One can then perform a series of experiments with different noise strength $\{\lambda_{i} \epsilon\}$, where $i=0,...,k$ and $\lambda_{i}>0$, and collect the expectation values $\langle A \rangle(\lambda_{i} \epsilon)$.
Applying the Richardson extrapolation method, one obtains the parameters $r_{i}$ to infer the noiseless value up to precision $\mc{O}(\epsilon^{k+1})$:
\begin{equation}
\langle A \rangle_{\text{EM}} = \sum_{i=0}^{k} r_{i} \langle A \rangle(\lambda_{i} \epsilon) = \langle A \rangle(0) + \mc{O}(\epsilon^{k+1}).
\end{equation}

Besides the original approach using Richardson extrapolation, other extrapolation methods exist.
For example, the exponential or linear extrapolation method used by Endo et al. \cite{Endo2018} uses two data points only.
We note that extrapolation methods are accurate up to order $\mc{O}(\epsilon^{k+1})$, compared with quasiprobability method below which can be fully accurate in principle.

\subsection{Quasiprobability Methods}
QEM using quasiprobability sampling was also proposed by Temme et al. in the same paper where Richardson extrapolation QEM was introduced~\cite{Temme2017a}, and was later generalized by Endo et al. in~\cite{Endo2018}.
Consider a target unitary gate $\mc{U}$ and its noisy version $\mc{U}_{\text{noisy}} = \mc{N} \mc{U}$, i.e., the ideal gate followed by a noise process.
One can always find a CPTP map $\mc{N}$ that satisfies the above equation, since the inverse of $\mc{U}$ is always CPTP, and the composition of two CPTP maps is CPTP.
The assumption is that there exists a complete set of (noisy) operations available to the experimentalist, denoted by
\begin{equation}
	\{\mc{G}_{1}, \dots, \mc{G}_{K} \}
\end{equation}
which has been pre-characterized.
This set is complete in the sense that they form a basis for the inverse noise channel $\mc{U}_{\text{noisy}}^{-1}$,
\begin{equation}
	\mc{N}^{-1} = \sum_{i} a_{i} \mc{G}_{i} = \tau \sum_{i} \frac{\abs{a_{i}}}{\tau} \text{sgn}(a_{i}) \mc{G}_{i}
\end{equation}
where $\tau = \sum_{i} \abs{a_{i}}$.
The coefficients $p_{i} \coloneqq \frac{a_{i}}{\tau}$ now form a probability distribution, so one can append additional gates $\mc{G}_{i}$ with probability $p_{i}$ after the noisy circuit to reverse the effects of $\mc{N}$.
For an observable $A$, we can express its ideal value by 
\begin{equation}
	\langle A \rangle_{\text{EM}} = \Tr[\mc{N}^{-1} \mc{N} \mc{U}(\rho) A] = \tau \sum_{i} p_{i} \text{sgn}(a_{i}) \langle A \rangle_{i}
\end{equation}
where $\langle A \rangle_{i} = \Tr[\mc{G}_{i} \mc{N} \mc{U} (\rho) A]$.
We see that by adding in $\mc{G}_{i}$ at the end of the circuit with probability $p_{i}$, and keeping track of each $\text{sgn}(a_{i})$, one obtains the ideal expectation value.
Note that $\sum_{i} a_{i} = 1$ due to the trace-preserving constraint.
Since some of the $a_{i}$'s are negative, $\tau \geq 1$, so the $\langle A \rangle_{\text{EM}}$ has a variance that is approximately $\tau^{2}$ times larger than that measuring with the ideal circuit.

It is interesting to point out that the original proposal for quasiprobability QEM aimed not to append $\mc{N}^{-1}$ at the end, but to directly simulate the \emph{ideal} unitary $\mc{U}$.
It may appear that this scenario falls under the error-agnostic category.
However, since it is still required that all noisy gates be characterized precisely, knowledge of noise in the real system is necessary.
Therefore the method is still implicitly error-based.
The effects of noise characterization comes in nontrivially through the gate decomposition procedure.
In practice, it is sometimes more favorable to use the $\mc{N}^{-1}$ approach introduced here.
This may happen when, for example, a decomposition for the near-identity map $\mc{N}$ has a much smaller cost $\tau$ than that for the target unitary $\mc{U}$.

\subsection{Readout Error Mitigation}
Readout error mitigation~\cite{Maciejewski2020,Chen2019b} aims to reduce the effect of noisy measurement operations during the readout step of quantum computation.
A general $K$-outcome measurement process can be described by a set of positive operator-valued measure (POVM) elements,
\begin{equation}
	\{M_{1}, \dots, M_{K}\},
\end{equation}
such that the probability of obtaining outcome $j$ given an input state $\rho$ is $p_{j} = \Tr[M_{j} \rho]$.
We may arrange the outcome probabilities with an ideal measurement apparatus as a vector, $P_{\text{ideal}} = (p_{1}, \dots, p_{K})^{T}$, and similarly for the actual outcome probability vector $P_{\text{noisy}}$.
One can see that there exists a transformation $T$ between the two such that
\begin{equation}
	P_{\text{noisy}} = T \cdot P_{\text{ideal}}.
\end{equation}
The matrix $T$ can be learned through experimentally measuring different input states, and its number of parameters may be less than the size of $T$ if one makes further assumptions about the form of noise~\cite{Bravyi2021}.
Once $T$ is learned, the ideal probability vector can be obtained by $P_{\text{ideal}} = T^{-1} \cdot P_{\text{noisy}}$.

\subsection{Error-Agnostic QEM}
We have so far focused on error-based QEM protocols.
Below we briefly go over some other protocols that do not require explicit knowledge about the form of error, and are therefore error-agnostic.
These approaches generally utilize certain structures of the problem that is known to the user prior to conducting any experiment.
Prominent examples include virtual distillation (VD), symmetry verification (SV), and $N$-representability.
In VD~\cite{Koczor2021,Huggins2021,Huo2022}, one obtains effective error-mitigated expectation values of an observable $A$ as $\langle A \rangle_{\text{EM}} = \Tr[A \rho_{\text{VD}}^{(m)}]$ where $\rho_{\text{VD}}^{(m)} = \rho^{m}/\Tr[\rho^{m}]$, through entangling operations between $m$ copies of the output state $\rho$.
The purity of this effective state increases with $m$, thereby eliminating the effects of stochastic errors.

In SV~\cite{Bonet-Monroig2018,McArdle2019}, one examines certain symmetry properties of the output state from a quantum program, and performs post-selection to remove outputs that disobey these symmetries.
For example, in variational quantum eigensolver (VQE) algorithms, one prepares an ansatz input state which is then updated variationally, in order to approach the true ground state of a target Hamiltonian.
There are often certain symmetries that are obeyed by the true ground state, such as the total number of electrons, as well as the number of spin-up spin-down electrons.
One can thus measure these symmetries in the output state, and discard results that violate any symmetry tested.

The idea of $N$-representability QEM~\cite{Rubin2018} is similar to that of SV.
Its name originated from the $N$-representability conditions in quantum chemistry, which stand for a set of necessary conditions that must be satisfied by the reduced (marginal) density matrices.
Due to the presence of noise, states measured in experiments might violate some of these conditions, so one may reduce the effects of these noise by projecting the experimental state onto the closest state satisfying the $N$-representability conditions.

\section{Error Correction vs. Mitigation: a Communication Viewpoint}\label{sec_QEC_vs_QEM}
One can see from the previous review that many protocols with different nature are currently summarized under the name ``quantum error mitigation''.
In this section we will provide a high-level comparison between QEM and QEC, with a focus on error-based QEM.
As previously shown, these are procedures that utilize knowledge about the noise occurring during computation and \emph{actively} try to eliminate their effects, making them more similar to \emph{active} QEC.
On the other hand, error-agnostic QEM are, in the eyes of the authors, closer to passive error correction.
Nonetheless, insightful comparison between error-agnostic QEM and QEC can be an interesting future research direction and may facilitate integrations between QEC and QEM in general.

The development of QEC has historically stemmed from previous studies of classical error correction (CEC).
Therefore, before discussing QEC or QEM, it is instructive to first look at their classical counterparts.
We will start our discussion by considering a communication task, which is essentially a trivial computation with the target unitary being the identity, but with two spatially separated parties being the input and output ends.
This illustrates some simple yet important distinction between EC and EM.
We then move on to computation and discuss the usefulness of QEM in quantum computation, using our intuition from the communication case.

\subsection{Classical Communication}\label{subsec_c_comm}

\begin{figure}[ht]
	\centering
	\includegraphics[width=\linewidth]{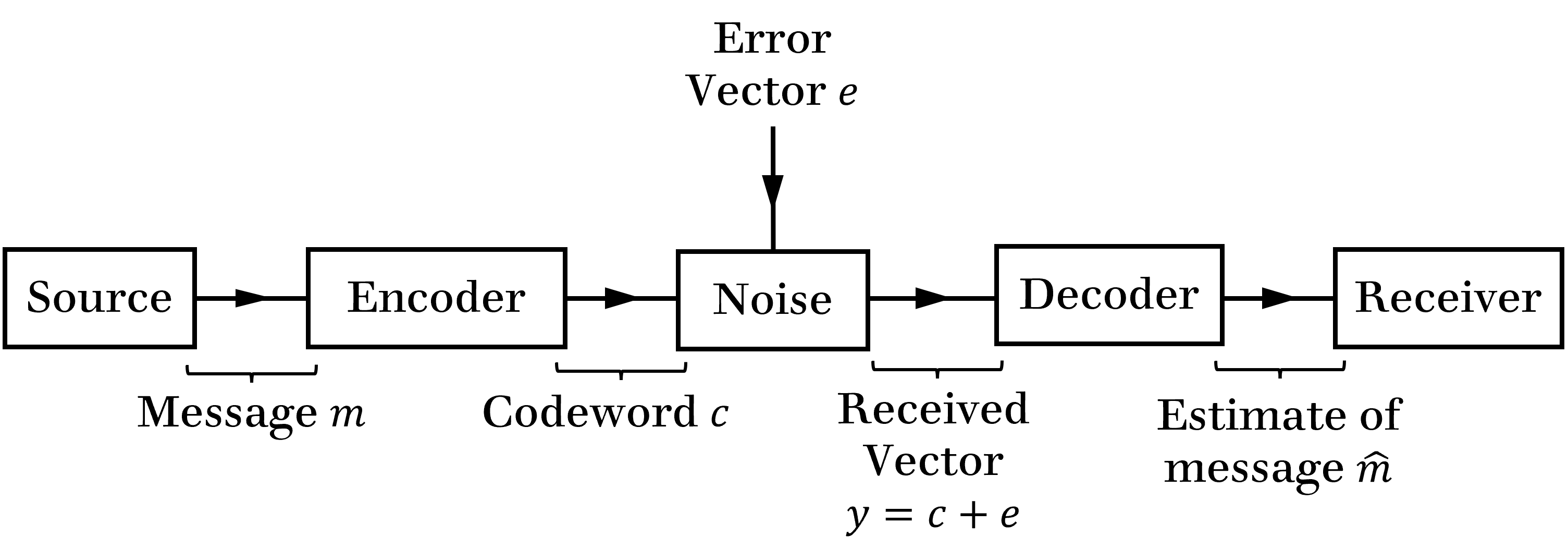}
	\caption{An illustrative diagram for noisy classical communication.}
	\label{fig_ccomm}
\end{figure}

In a classical communication task, a sender Alice would like to transmit a $k$-bit string to Bob.
An example of the string would look like
\begin{equation*}
s = 11010101000100110101010011...
\end{equation*}
Alice and Bob share a classical communication channel $C$ which is subject to noise.
She does so by sending each letter of the text through $C$, involving a total of $k$ uses of $C$.
For simplicity, assume that the noise is described by a binary symmetric noise channel with strength $p$, denoted as $\text{BSC}_{p}$.
This channel preserves the sent bit with probability $1-p$, and flips it (symmetrically from $0$ to $1$ and from $1$ to $0$) with probability $p$.
The matrix representation of the noise is given by
\begin{equation}\label{eqn_BSC_noise}
\text{BSC}_{p} \rightarrow \begin{pmatrix}
1-p & p \\ p & 1-p
\end{pmatrix} \coloneqq N_{\text{BSC},p}
\end{equation}
where both the input and output basis are ordered as $\{0,1\}$.
If Alice sends a bit $0$, it will have a distribution
\begin{equation}
N_{\text{BSC},p} \begin{pmatrix}
1 \\ 0
\end{pmatrix} = \begin{pmatrix}
1-p \\ p
\end{pmatrix},
\end{equation}
at the output end.

%Classical error correcting codes have been developed to fight this noise.
An illustrative diagram for classical error correcting codes against this noise is given in \cref{fig_ccomm}.
The simplest example is the $3$-bit repetition code, defined by the encoding  
\begin{equation}\label{eqn_3_bit_code}
0 \rightarrow 000,\ 1 \rightarrow 111,
\end{equation}
i.e., each bit is repeatedly encoded $3$ times.
The number of uses of the channel has now increased $3$ times to $3k$.
The decoding is done by performing a majority vote on the received bits, so that at the receiving end, Bob again obtains a bit string of length $k$.
Assuming that $\text{BSC}_{p}$ acts independently on each bit, the probability of error is reduced from $p$ to $3p^{2}(1-p) = \mc{O}(p^{2})$ by this code.

Next we ask the question: Can Bob improve the quality of communication, given some knowledge about the noise?
From the form of noise in \cref{eqn_BSC_noise}, the best possible knowledge Bob could have is the precise value of $p$.
Suppose in addition that $p<1/2$.
If Bob receives a bit $1$, then he only knows that Alice more likely sent a $1$ than a $0$, so the best deterministic procedure is simply to keep the bit.
Applying this argument to all received bits, we see that the best Bob can do is to simply keep all received bits intact.
Similarly, if $p>1/2$, then Bob's best action is to flip all the received bits.
Clearly, this does not increase Bob's information on Alice's message, as measured by the classical mutual information.
In fact, what we have shown is a special case of the classical data-processing inequality, which states that no post-processing can increase the mutual information between Alice and Bob.

What Bob \emph{can} do is to recover the \emph{distribution} of Alice's input for sufficiently large $k$.
Specifically, he can apply the inverse map of $N_{\text{BSC},p}$ on his received distribution, resulting in
\begin{equation}
N_{\text{BSC},p}^{-1} N_{\text{BSC},p} v_{A} = v_{A},
\end{equation}
where $v_{A} = (p_{0,a}, p_{1,a})^{T}$ is Alice's input distribution.
However, Bob cannot further use this restored distribution to recover Alice's \emph{message}.
The best he can do is to use the restored distribution to randomly generate a new $k$-bit string, during which Alice's message is completely destroyed.

\subsection{Quantum Communication}\label{subsec_q_comm}

\begin{figure}
	\centering
	\begin{subfigure}[b]{0.42\linewidth}
		\centering
		\includegraphics[width=\textwidth]{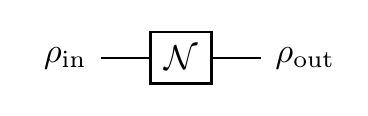}
		\caption{}
		\label{fig_qcomm_1}
	\end{subfigure}
	\hfill
	\begin{subfigure}[b]{0.65\linewidth}
		\centering
		\includegraphics[width=\textwidth]{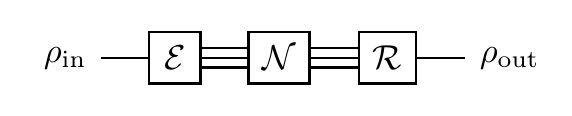}
		\caption{}
		\label{fig_qcomm_2}
	\end{subfigure}
	\caption{(a) Quantum communication model where Alice sends a state $\rho_{\text{in}}$ to Bob through a noisy channel $\mc{N}$. (b) An attempt to reduce the effect of $\mc{N}$ using QEC, through an encoding operation $\mc{E}$ and a recovery operation $\mc{R}$.}
	\label{fig_qcomm}
\end{figure}

In quantum communication, Alice and Bob communicates by sharing a quantum channel $Q$ capable of transmitting quantum particles.
Alternatively, one may also model the situation as having a central source sending out (potentially entangled) particles to Alice and Bob.
If this source is located in Alice's lab, then the channel from source to Alice is ideal and the one to Bob is described by a noisy channel $\mc{N}$, as shown in \cref{fig_qcomm_1}.
Here, they are interested in either encoding classical information in the particles and use $Q$ to transmit a classical message, or directly sharing quantum particles which may have been prepared in some special states.

\subsubsection{Quantum Communication of Classical Information}\label{QC_classical_info}
In the first scenario, the classical information may be extracted at Bob's end by measuring the received state.
For each use of the channel, we will name the state that Alice sent $\rho_{\text{in}}$ and the one coming out of Bob's end $\rho_{\text{out}} = \mc{N}(\rho_{\text{in}})$.
Without loss of generality, assume that the information is encoded in the expectation value of some observable $O$.
Due to the nature of quantum measurements, every possible outcome occurs randomly with probabilities determined by the Born rule.
Therefore, Alice must send multiple copies of the state in order for Bob to extract the expectation value, $\Tr[O \rho_{\text{out}}]$.
This is to be contrasted with classical communication, where sending 1 bit of information only involves using $C$ once, in the limit of an ideal channel.
One thus sees that, sending classical information using a quantum channel inevitably results in Bob measuring a distribution of possible outcomes.

If Bob knows the exact form of $\mc{N}$, then he may reconstruct the matrix  $\rho_{\text{out}}$ using sufficiently many copies of received states, assuming that the size of $\rho_{\text{out}}$ is not too large.
He may then apply $\mc{N}^{-1}$ to perfectly restore $\rho_{\text{in}}$, and consequently use this classical copy to infer the expectation value of any observable.
This is in direct analogy to restoring the distribution of input characters using CEM described in the previous section.

Alternatively, Alice and Bob may also use QEC to fight against this noise.
A QECC is defined by an encoding scheme, which is a completely positive and trace preserving (CPTP) map $\mc{E}: \mathds{C}_{2^{k}} \rightarrow \mathds{C}_{2^{n}}$ (where $n > k$), and then decode at Bob's end using another CPTP map $\mc{R}: \mathds{C}_{2^{n}} \rightarrow \mathds{C}_{2^{k}}$.
Here $\mathds{C}_{2^{n}}$ denotes the complex Euclidean space with dimension $2^{n}$.
This involves a total of $nk$ uses of the quantum channel.
Assuming that $\mc{N}$ is correctable by the QECC~\cite{Knill1996}, the output state $\rho_{\text{out}}$ will be equal to the input $\rho_{\text{in}}$, so directly measuring this error-corrected state will give Bob the correct expectation value for any observable.

One may now recognize the similarity between QEM and CEM for reconstructing the input \emph{distribution}.
Indeed, a density matrix \emph{is} probabilistic description of outcomes of any possible measurement on a quantum system.
In this scenario, it is sufficient for Bob to reconstruct $\rho_{\text{in}}$ as a mathematical object in order to eliminate the effect of $\mc{N}$, since this determines the outcome of \emph{any measurement} Bob can possibly make on the system.

\subsubsection{Quantum Communication of Quantum Information}

\begin{figure}
	\centering
	\begin{subfigure}[b]{0.4\linewidth}
		\centering
		\includegraphics[width=\textwidth]{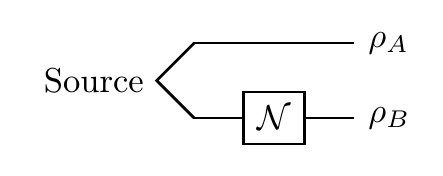}
		\caption{}
		\label{fig_1EPP_1}
	\end{subfigure}
	\hfill
	\begin{subfigure}[b]{0.55\linewidth}
		\centering
		\includegraphics[width=\textwidth]{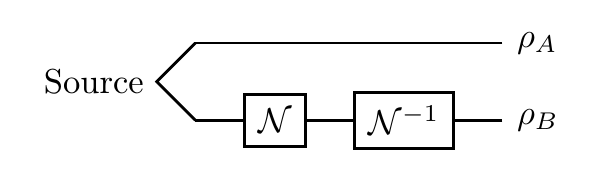}
		\caption{}
		\label{fig_1EPP_2}
	\end{subfigure}
	\hfill
	\begin{subfigure}[b]{0.9\linewidth}
		\centering
		\includegraphics[width=\textwidth]{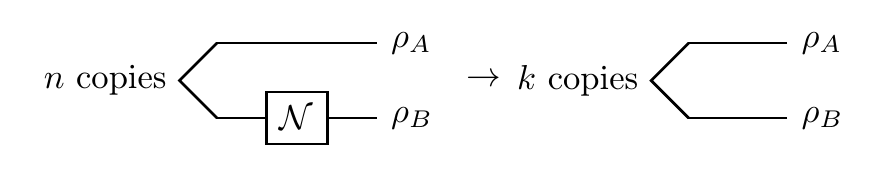}
		\caption{}
		\label{fig_1EPP_3}
	\end{subfigure}
	\caption{(a) Figure where Alice and Bob receives an EPR pair $\Phi^{+}$ from a source, and a noise $\mc{N}$ occur for Bob's channel. (b) A QEM approach. (c) A QEC approach.}
	\label{fig_1EPP}
\end{figure}

In the second scenario, the goal is to send particles encoding quantum information from Alice to Bob.
The physical states of the particles represent the quantum information encoded, which must be kept ``alive'' for such a task.
This scenario is relevant when, for example, $\rho_{\text{in}}$ is the output from a computation done by Alice, which needs to be transferred to Bob for further processing.
Clearly, the QEC approach shown in \cref{fig_qcomm_2} is capable of eliminating the effect of $\mc{N}$ if it falls within the set of correctable errors, because what Bob gets at the end is an actual quantum object.
In contrast, one cannot use the ``error mitigation'' approach described in the previous section, where Bob applies $\mc{N}^{-1}$ onto a \emph{classical image} of $\rho_{\text{out}}$, since this only corrects Bob's description about $\rho_{\text{out}}$ rather than the physical state itself.

Another case is when Alice and Bob tries to establish shared entanglement to achieve certain tasks, such as winning a nonlocal game.
This is relevant because entanglement is among the most commonly accepted benchmarks for quantum information, and was speculated to have a similar role as classical information~\cite{Bennett1996}.
As shown in \cref{fig_1EPP}, Alice prepares $k$ copies of maximally entangled Bell pairs
\begin{equation}
\Phi^{+} = (\ket{00}_{AB} + \ket{11}_{AB})/\sqrt{2}
\end{equation}
and sends half of each pair to Bob.
Again, the noisy channel $\mc{N}$ to Bob potentially reduces the entanglement shared between Alice and Bob.
Historically, this task is achieved by a family of procedures called entanglement purification protocols (EPPs)~\cite{Bennett1996}, illustrated in \cref{fig_1EPP_3}.
In EPP, Alice and Bob needs to start from $n>k$ copies of the noisy Bell state, and obtain $k$ pairs at the end which are closer to the pure state $\Phi^{+}$. 
Profoundly, a sub-class of EPP protocols called one-way EPP (or 1-EPP), where only communication from Alice to Bob is allowed, have a one-to-one correspondence with QECCs~\cite{Bennett1996}.
Therefore, QEC is again useful in this scenario.

On the contrary, one sees that the QEM approach described in the previous section cannot be used to purify entanglement.
In particular, all entanglement between Alice and Bob would be destroyed due to the measurement.
In fact, analogous to the classical case where Bob recovers Alice's input distribution and generate a random $k$-bit string, here Bob knows in advance that he will ideally get a maximally mixed state $I/2$; so the above protocol is simply equivalent to Bob generating $(I/2)^{k}$ locally, and discarding all qubits received from the source!

The above example illustrates a fundamental distinction between QEM, which is only capable of restoring the classical image of a quantum system, and QEC, which is capable of restoring the quantum object \emph{itself}, along with all possible non-classical resources possessed by that object.
It is instructive to recall again the case where one needs to preserve classical information (see \cref{subsec_c_comm}), where we have also argued that CEC is helpful for such a task, while CEM is not.
Furthermore, recall in \cref{QC_classical_info} we argued that recovering density matrices in QEM is analogous to recovering classical distributions in CEM.
These complete our comparisons between EM and EC, which are summarized in \cref{table_ec_em_comp}.

\begin{table*}[ht]
	\centering
	\begin{tabular}{c | c | c}
		\hline
		& EC & EM \\
		\hline
		Classical & Classical information being transmitted & Classical distribution of possible outputs \\
		\hline
		Quantum & Physical quantum objects being transmitted & Density matrices describing the physical quantum objects\\
		\hline
	\end{tabular}
	\caption{Summary of what EC and EM could recover at the output end, under classical and quantum communication settings.}
	\label{table_ec_em_comp}
\end{table*}

\section{Noise Invertibility and the Drazin-Inverse}\label{sec_Drazin}

Having illustrated the fundamental distinction between error-based QEM and QEC under the communication setting, it is then natural to ask under what circumstances are these two methods useful.
For QEC, the set of errors (and linear combinations within) are correctable if they obey the Knill-Laflamme conditions~\cite{Knill1997,Knill1996} for the particular code being used.
While there exists no parallel, quantitative results for QEM, the form of some error-based QEM protocols suggests that the inverse noise channel plays an important role, since it is what these protocols aim to apply to the output (directly in the case of readout QEM, and indirectly in the case of quasiprobability QEM, for instance).
In this section, we discuss the implications of noise invertibility on error-based QEM, and propose an alternative quasi-inverse construction in the case of non-invertible noise.

A quantum noise process is, on the physical level, described by a completely-positive (CP) and trace-preserving (TP), or CPTP map, $\mc{N}$.
Below we first define a matrix representation for quantum states and maps.
In this work, we denote the space of linear operators mapping Hilbert space $H_{A}$ to $H_{B}$ as $L(H_{A},H_{B})$, or $L(H_{A})$ in short if $H_{A} = H_{B}$.
Let $T(H_{A},H_{B})$ be the space of linear maps from $L(H_{A})$ to $L(H_{B})$.
Let $e_{i}$ be the standard basis of $H_{i}$ with a 1 at position $i$ and 0 elsewhere. 
Let $E_{a,b}$ be the standard basis of $L(H_{A},H_{B})$ with a 1 at position $(a,b)$ and 0 elsewhere. 
\begin{definition}
	(Vectorization of linear operators.) The vec mapping $\Sp{\cdot}:\ L(H_{A},H_{B}) \rightarrow H_{B} \otimes H_{A}$ is the unique mapping that satisfies $\Sp{E_{a,b}} = e_{b} \otimes e_{a}$. 
\end{definition}
Next we define two representations for quantum maps.
\begin{definition}
	(Choi representation.) The Choi representation of a map $\mc{M} \in T(H_{A},H_{B})$ is defined by $C(\mc{M}) = \sum_{a,b}  E_{a,b} \otimes \mc{M}(E_{a,b})$.
\end{definition}

\begin{definition}
	(Natural representation.) The natural (or equivalently, superoperator) representation of a map $\mc{M} \in T(H_{A},H_{B})$ is defined by the unique linear operator $\Sp{\mc{M}} \in L(H_{A} \otimes H_{A}, H_{B} \otimes H_{B})$ that satisfies $\Sp{\mc{M}} \Sp{A} = \Sp{\mc{M}(A)}$ for all $A \in L(H_{A})$.
\end{definition}	

In the natural representation, the channel $\mc{N}$ acting on a quantum state $\rho$ can be written as the superoperator $\Sp{\mc{N}}$ multiplying the vector representation $\Sp{\rho}$ of the quantum state $\rho$~\cite{watrous2018theory}. The vector representation $\Sp{\rho}$ of $\rho$ inherits its ordering from the superoperator, hence we abuse the notation $\Sp{\cdot}$ for vector representations of quantum states and observables (which are often written as double kets $|\rho\rangle\!\rangle$ in other literature).

The following theorem directly comes from representation theory of linear maps.
\begin{theorem}\label{thm_invert}
	The quantum channel $\mc{N}$ is invertible iff $\Sp{\mc{N}}$ is an invertible matrix.
\end{theorem}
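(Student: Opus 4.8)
The plan is to exploit the fact that the vectorization $\Sp{\cdot}$ is itself a linear isomorphism, so that the natural representation $\Sp{\mc{N}}$ is merely a faithful matrix encoding of the linear map $\mc{N}$; the theorem then reduces to the elementary fact that a linear map between finite-dimensional spaces is invertible if and only if any matrix representing it in fixed bases is invertible. Here ``invertible'' for $\mc{N}$ is meant in the linear-algebraic sense (the inverse need not be CPTP), and since the noise process acts on a fixed system we take $H_A = H_B$, so that $\Sp{\mc{N}}$ is a square operator on $H_A \otimes H_A$ and its matrix invertibility is well-posed.

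First I would record that $\Sp{\cdot}$ sends the standard basis $\{E_{a,b}\}$ of $L(H_A)$ bijectively onto the product basis $\{e_b \otimes e_a\}$ of $H_A \otimes H_A$, so it is a linear bijection on the relevant operator space. Next I would rewrite the defining relation $\Sp{\mc{N}}\Sp{A} = \Sp{\mc{N}(A)}$, valid for all $A$, as the operator identity $\mc{N} = \Sp{\cdot}^{-1} \circ L_{\Sp{\mc{N}}} \circ \Sp{\cdot}$, where $L_{\Sp{\mc{N}}}$ denotes left multiplication by the matrix $\Sp{\mc{N}}$. This exhibits $\mc{N}$ as a composition of three linear maps whose outer two factors are isomorphisms, so $\mc{N}$ is invertible if and only if the middle factor $L_{\Sp{\mc{N}}}$ is, i.e.\ if and only if $\Sp{\mc{N}}$ is an invertible matrix. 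To make both directions explicit, in the forward direction I would use the multiplicativity $\Sp{\mc{M} \circ \mc{N}} = \Sp{\mc{M}}\,\Sp{\mc{N}}$ --- immediate from the defining property --- to get $\Sp{\mc{N}^{-1}}\,\Sp{\mc{N}} = \Sp{\mc{N}^{-1} \circ \mc{N}} = \Sp{\mc{I}} = \mathds{1}$, and symmetrically on the other side; in the reverse direction I would define $\mc{M}$ by $\Sp{\mc{M}(B)} = \Sp{\mc{N}}^{-1}\Sp{B}$ (well defined because vec is a bijection) and verify $\mc{M} \circ \mc{N} = \mc{N} \circ \mc{M} = \mc{I}$.

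I do not expect a genuine obstacle, as the statement is essentially a restatement of the basis-independence of invertibility. The only points requiring care are bookkeeping ones: confirming that $\Sp{\cdot}$ is a bijection on the chosen bases, that $H_A = H_B$ so the matrix $\Sp{\mc{N}}$ is square (otherwise ``invertible matrix'' would need reinterpretation as a statement about rank), and that the natural representation respects composition, which is exactly what lets the inverse map correspond to the inverse matrix.
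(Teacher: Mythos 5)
Your proposal is correct and is essentially the paper's own argument made explicit: the paper offers no detailed proof, remarking only that the theorem ``directly comes from representation theory of linear maps,'' which is exactly what you spell out --- vectorization is a linear isomorphism, so the natural representation is a faithful, composition-respecting matrix encoding of $\mc{N}$, and invertibility transfers in both directions. Your bookkeeping points (squareness of $\Sp{\mc{N}}$ when $H_A = H_B$, multiplicativity of the representation) are the right details to check, and none of them presents an obstacle.
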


Below we give an example where the inverse $\mc{N}^{-1}$ of a CPTP map $\mc{N}$ is non-CP.
\begin{example}
	Let the Choi representation of a quantum channel $\mc{N}$ be
	\[
	C(\mc{N}) = 
	\left(\begin{array}{cc|cc}
	\frac34 & 0 & -\frac{i}8 & \frac12 + \frac{i}8\\ 
	0 & \frac14 & -\frac{i}8 & \frac{i}8 \\ \hline
	\frac{i}{8} & \frac{i}8 & \frac14 & 0\\
	\frac12 - \frac{i}8 & -\frac{i}8 & 0 & \frac34
	\end{array}\right).
	\]
	The superoperator is 
	\[
	\Sp{\mc{N}} = 
	\begin{pmatrix}
	%\left(\begin{array}{cc|cc}
	\frac34 & \frac{i}{8} & -\frac{i}{8} & \frac{1}{4}\\ 
	0 & \frac12 - \frac{i}8 & -\frac{i}{8} & 0\\
	0 & \frac{i}{8} & \frac12 + \frac{i}8 & 0\\
	\frac14 & -\frac{i}8 & \frac{i}{8} & \frac34
	%\end{array}\right).
	\end{pmatrix}.
	\]
	
	Therefore, the inverse of $\Sp{\mc{N}}$ is 
	
	\[
	\Sp{\mc{N}^{-1}} = 
	\left(\begin{array}{cccc}
	\frac32 & \frac14-\frac{i}2 & \frac14+\frac{i}2 & -\frac12 \\
	0 & 2+\frac{i}2 & \frac{i}2 & 0 \\
	0 & -\frac{i}2 & 2-\frac{i}2 & 0 \\
	-\frac{1}2 & -\frac{1}4+\frac{i}2 & -\frac{1}4-\frac{i}2 & \frac{3}2
	\end{array}\right).
	\]

Its Choi representation is
\[
C(\mc{N}^{-1}) = 
\left(\begin{array}{cc|cc}
\frac{3}2 & 0 & \frac14 + \frac{i}2 & 2-\frac{i}2\\ 
0 & -\frac{1}2 & \frac{i}2 & -\frac{1}4-\frac{i}2\\ \hline
\frac{1}4 - \frac{i}2& -\frac{i}2 & -\frac{1}2 & 0\\
2+\frac{i}2 & -\frac14+\frac{i}2 & 0 & \frac{3}2
\end{array}\right).
\]
The Choi representation $C(\mc{N}^{-1})$ has negative eigenvalues. Therefore, $\mc{N}^{-1}$ is a Hermitian preserving (HP) and trace preserving (TP) map, but not CP.
\end{example}

There are three distinct possibilities regarding noise invertibility.
The first is that $\mc{N}$ is invertible, and $\mc{N}^{-1}$ is CPTP.
In this case, the inverse $\mc{N}^{-1}$ is unique, and is Hermitian preserving (HP) and trace preserving (TP)~\cite{jiang2020physical}. 
Note that the set of CPTP maps is a subset of HPTP maps.
For $\mc{N}^{-1}$ to be CPTP, the channel $\mc{N}$ has to be a unitary channel, or can be seen as an unitary channel acting on the input state along with an ancilla prepared to a fixed state~\cite{nayak2006invertible}.
If the dimensions of input and output space are the same, the channel has a CPTP inverse \textit{iff} the channel is an unitary channel~\cite{preskill1998lecture,nayak2006invertible}.
This is a relatively easy scenario, because the quantum information can be coherently restored by physically applying $\mc{N}^{-1}$ to the output state.

The second possibility is that $\mc{N}$ is invertible, but $\mc{N}^{-1}$ is not CPTP.
A condition for when this will happen is later given in Proposition~\ref{thm:non-p}.
Many experimentally relevant noise models, such as the phase damping channel and the depolarizing channel, fall under this category.
Since $\mc{N}^{-1}$ is not a physically realizable operation, it cannot be experimentally implemented on the target system, so our above method to restore quantum information without redundancy fails.
Using QEM procedures, one can still recover the classical information in principle, by first extracting the classical output density matrix through measurements, and numerically apply the inverse map $\mc{N}^{-1}$.
But the process of measurement will inevitably disturb the system being measured, and destroy any entanglement it possibly has with other systems.

The third possibility is that $\mc{N}$ is non-invertible.
First, what does it mean for a quantum channel to be non-invertible?
Any CPTP map can be dilated to a unitary channel in a larger Hilbert space, and unitary channels are all invertible. 
What happens while tracing out the environment?
For a physical action to be written as a CPTP map, one needs to assume that the system and environment are separable in the beginning, which means that the input state has to have the form $\rho\otimes \sigma$.
If the unitary $U$ is also separable, tracing out the environment will also result in a unitary evolution on the system.
The correlation between system and environment causes the evolution of the system to become a general CPTP map. 
The non-invertibility of a channel signals strong non-locality, i.e., there is information in the whole system (system and environment) that is entirely invisible in the local system.

An example is the CNOT gate in~\cref{fig:cnots}. 
Let the initial state $\rho_{AB}$ be $\rho_A\otimes \ketbra{0}{0}$. Considering the controlling qubit $B$ as the environment, the channel ${\cal N}_A$ on the system qubit $A$ is non-invertible. The density matrix of qubit $A$ after the first CNOT gate is
\[
\rho_{A'} = {\cal N}_A(\rho_A) = \tr_{B}[U_{CNOT}(\rho_A\otimes \ket{0}\bra{0})U_{CNOT}^\dag]. %= \sum_{i=1}^2 E_i\rho_A E_i^\dag
\]
Some information (more precisely, certain off-diagonal entries) in the 2-qubit density matrix $\rho_{A'B'}$ does not reflect in the density matrix $\rho_{A'}$ of the system qubit (in fact, the non-local information cannot be seen locally in $\rho_{B'}$ as well).
However, after the second CNOT gate, the whole system $AB$ backtracks to the original state. 
That is to say
\begin{align*}
	\rho_A &= \rho_{A''} \\
	&= \tr_{B}[U_{CNOT}^\dag U_{CNOT}(\rho_A\otimes \ket{0}\bra{0})U_{CNOT} U_{CNOT}^\dag] \\
	&\neq {\cal N}_A^{-1}{\cal N}_A (\rho_A).
\end{align*}
This is because the second CNOT gate utilizes non-local information that is not available locally for qubit $A$.

Although the non-local information leak out of the system through the null space cannot be recovered without collecting information from the environment, the local rotation and contraction caused by the noise channel can be restored. 

\begin{figure}[ht]
	\includegraphics[width = .5\linewidth]{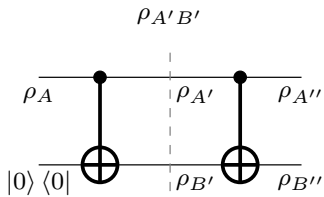}
	\caption{%The qubit $A$ is the system and qubit $B$ is the environment. 
	The equivalent channel $\cal{N}_A$ on system $A$ is non-invertible. However, the second CNOT gate brings the whole device back to the original state, i.e. $\rho_{A''B''} = \rho_{A''}\otimes\rho_{B''} = \rho_A\otimes \ket{0}\bra{0}$. The locally unseen information in $\rho_{A'B'}$ flow back to system $A$ after the second CNOT.}\label{fig:cnots}
\end{figure}

Second, how can we deal with these non-invertible noise channels?
It is known that the superoperator $\Sp{\mc{N}^{-1}}$ of the inverse channel $\mc{N}^{-1}$ equals to the inverse $\Sp{\mc{N}}^{-1}$ of the superoperator $\Sp{\mc{N}}$.
However, if the channel $\mc{N}$ is not invertible, the generalized inverse of $\Sp{\mc{N}}$ is not unique.
A commonly used generalized inverse is the Moore-Penrose inverse~\cite{moore1920reciprocal,penrose1955generalized}, but in~\cref{exp:pseudo} we show that the Moore-Penrose inverse of a CPTP map is not necessarily TP. 

In the following, we provide a construction of inverse-like channel $\mc{N}^+$.

Let the dimension of input and output space be $d$.
Take the Jordan decomposition of the superoperator of $\mc{N}$,
\begin{equation}\label{eq:decomp}
\mathbf{v}(\mc{N}) = Q\cdot J\cdot Q^{-1}
\end{equation}
where $J = \oplus_i J_{\lambda_i} $ is the Jordan normal form, $J_{\lambda_i}$ is a Jordan block corresponding to the eigenvalue $\lambda_i$, and $Q$ is a invertible matrix contains the generalized eigenvectors of $\Sp{\mc{N}}$.
If $\mathbf{v}(\mc{N})$ is diagonalizable, the Jordan normal form $J = \text{diag}[\lambda_1,\cdots,\lambda_{d^2}]$ is the diagonal matrix contains eigenvalues $\lambda_i$ of $\mathbf{v}(\mc{N})$.

We take the inverse-like channel $\mc{N}^{+}$ to be
\begin{equation}\label{eq:inverse}
\mathbf{v}(\mc{N}^{+}) = Q \cdot J'\cdot Q^{-1}.
\end{equation}
If $\mathbf{v}(\mc{N})$ is diagonalizable, $J'$ is the diagonal matrix that leaves the $0$'s in $J$ untouched and take the reciprocal of the rest elements in $J$.
If $\mathbf{v}(\mc{N})$ is defective, we can construct each Jordan block in the following way:
a $k$ by $k$ Jordan block $J_{\lambda_i}$ of $\lambda_i$ ($\lambda_i\neq 0$) in $J$ is 
\[J_{\lambda_i} = 
\begin{pmatrix}
\lambda_i & 1 &  &  \\
& \lambda_i & \ddots &  & \\
&  & \ddots &  & 1\\
&  &   &  & \lambda_i\\
\end{pmatrix},
\]
let the corresponding block $J'_{\lambda_i}$ in $J'$ be the inverse of $J_{\lambda_i}$
\[J'_{\lambda_i} \coloneqq J_{\lambda_i}^{-1} = 
\begin{pmatrix}
\frac{1}{\lambda_i} & -\frac{1}{\lambda_i^2} &  &\cdots & (-1)^{k+1}\frac{1}{\lambda_i^k} \\
&  \frac{1}{\lambda_i}  & -\frac{1}{\lambda_i^2}&  \cdots & (-1)^{k}\frac{1}{\lambda_i^{k-1}}\\

&  & \ddots & \ddots & \vdots\\
&  &  &  \frac{1}{\lambda_i}  & -\frac{1}{\lambda_i^2}\\ 
&  &  &  & \frac{1}{\lambda_i}\\
\end{pmatrix}.
\]
For a $k$ by $k$ Jordan block of diagonal zero ($\lambda_i = 0$), which is the nilpotent matrix $N$, we can set the corresponding block in $J'$ as a zero matrix $0_k$ . Since $N$ is not invertible, letting the block be $0_k$ will have the same result as setting it as $N^{k-1}$.
There is a certain freedom in the choice of this block.

Note that, for invertible channels, $\mc{N}^+$ described above provides the inverse $\mc{N}^{-1}$ of the channel ($\mc{N}^+ = \mc{N}^{-1}$).
For non-invertible channels, this construction~\cref{eq:inverse} does not satisfy the condition of generalized inverse ($\mc{N}\circ \mc{N}^{+}\circ\mc{N} \neq \mc{N}$ when the dimension of the nilpotent Jordan block is greater than one). We will call $\mc{N^+}$ the \textit{Drazin-inverse} since it is the same construction as the Drazin inverse in matrix analysis~\cite{drazin1958pseudo}.

The resulting composed map $\mathbf{v}(\mc{N})\mathbf{v}(\mc{N}^+) = QJ''Q^{-1}$, where $J''$ is a diagonal matrix with only 0's and 1's on its main diagonal. When the noise channel $\mc{N}$ already only contains $1$ and $0$ in its spectrum, the Drazin-inverse is itself, and does not recovery more information. In fact, any generalized inverse would not improve the outcome in this case. 

The following proposition tells us another condition for a quantum channel to have a non-CP (Drazin-) inverse.

\begin{prop}\label{thm:non-p} 
	If a non-zero eigenvalue $\lambda$ of a quantum channel $\mc{N}$ has modulus less than $1$ ($|\lambda|<1$), then the inverse (or Drazin-inverse) channel $\mc{N}^{+}$ is not completely positive.
\end{prop}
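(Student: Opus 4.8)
The plan is to argue by contradiction, exploiting a spectral constraint that every CPTP map must obey. The key fact I would invoke is that for any CPTP map $\mc{M}$, every eigenvalue $\mu$ of its superoperator $\Sp{\mc{M}}$ satisfies $|\mu|\le 1$. This holds because a CPTP map is a contraction in the trace norm, $\|\mc{M}(X)\|_1 \le \|X\|_1$, so the induced $1\to 1$ norm of $\Sp{\mc{M}}$ is at most $1$, and the spectral radius of any operator is bounded above by any induced norm~\cite{watrous2018theory}. Hence the spectrum of a CPTP map is confined to the closed unit disk. Note that mere complete positivity is not enough for this bound (e.g.\ $X\mapsto 2X$ is CP with eigenvalue $2$), so trace preservation will be essential.

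Next I would track what the (Drazin-)inverse does to the eigenvalue $\lambda$. From the Jordan decomposition in \eqref{eq:decomp} together with the definition in \eqref{eq:inverse}, each non-zero eigenvalue $\lambda$ of $\Sp{\mc{N}}$ is sent to $1/\lambda$ in $\Sp{\mc{N}^{+}}$: the block $J_{\lambda}$ is replaced by $J_{\lambda}^{-1}$, whose only eigenvalue is $1/\lambda$ (in the genuinely invertible case this is simply $\Sp{\mc{N}^{+}}=\Sp{\mc{N}}^{-1}$, so the statement is immediate). Therefore $1/\lambda$ is an eigenvalue of $\Sp{\mc{N}^{+}}$, and since $0<|\lambda|<1$ we have $|1/\lambda| = 1/|\lambda| > 1$.

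To close the argument, suppose toward a contradiction that $\mc{N}^{+}$ is completely positive. The (Drazin-)inverse is trace preserving by construction, so $\mc{N}^{+}$ would then be CPTP, and by the spectral fact above all of its eigenvalues would satisfy $|\mu|\le 1$. This directly contradicts the presence of the eigenvalue $1/\lambda$ of modulus strictly greater than $1$. We therefore conclude that $\mc{N}^{+}$ cannot be completely positive.

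The step I expect to be the main obstacle is cleanly justifying the unit-disk bound on the spectrum of a CPTP map — in particular, reducing the trace-norm contraction statement (most transparent on Hermitian operators, where $\|\mc{M}(X)\|_1\le\|X_+\|_1+\|X_-\|_1=\|X\|_1$) to a genuine bound on the \emph{complex} spectrum of $\Sp{\mc{M}}$, which uses that a Hermitian-preserving map has its eigenvalues occurring in conjugate pairs over the real form given by the Hermitian subspace. One must also confirm that trace preservation of the Drazin-inverse is genuinely in hand at this point, so that the CPTP spectral bound (and not merely the vacuous CP-only statement) can be applied. Once these two ingredients are secured, the eigenvalue bookkeeping $\lambda\mapsto 1/\lambda$ and the final contradiction are immediate.
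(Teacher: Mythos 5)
Your proposal is correct and takes essentially the same approach as the paper's proof: the paper likewise invokes the fact that a CPTP map has spectral radius one, notes that the (Drazin-)inverse sends the eigenvalue $\lambda$ to $1/\lambda$ with $\abs{1/\lambda}>1$, and concludes that $\mc{N}^{+}$ cannot be completely positive. If anything, you are more careful than the paper, which passes from ``spectral radius greater than one'' directly to ``not CP'' without noting, as you do, that trace preservation of $\mc{N}^{+}$ (established in \cref{thm:tp}) is what rules out CP-but-not-TP maps of large spectral radius such as $X \mapsto 2X$.
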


\begin{proof}
	$\mc{N}$ is a CPTP map, therefore its spectral radius is one~\cite{wolf2010inverse}, i.e. $|J_{ii}| \le 1$ for any main diagonal element $J_{ii}$ in $J$. Since $\mc{N}$ has eigenvalues less than $1$, there exists $|J_{jj}| < 1$ for some $j\in \{1,\cdots,d^2\}$. 
	As defined above, $|J'_{jj}| > 1$, i.e. the spectral radius of $\mc{N}^{+}$ is greater than one. Therefore, $\mc{N}^{+}$ is not complete positive.
\end{proof}

Any general (non-unitary) quantum channel cannot be fully recovered by another channel since CPTP maps cannot resolve contractions in the subspace corresponding to $|\lambda_i|<1$.
To deal with contractions, the CP property has to be broken.
That is to say, restricting the generalized inverse $\mc{N}^{g}$ to be CPTP has to scarifies the quality of recovery.
However, while HPTP maps cannot be directly implemented in a physical system, Steinspring dilation theory guarantees CPTP maps can be extended to unitary channels and hence executed in physical devices. This inspires several beautiful works on finding CPTP inverses that optimize the average fidelity of the recovery~\cite{karimipour2020quasi,shahbeigi2021quasi,aurell2015time}.
On the other hand, HPTP maps can be decomposed as a linear combination of CPTP maps, and so can be implemented in a physical device.
The physical implementability of HPTP maps is defined and discussed in~\cite{jiang2020physical,regula2021operational}.
Also note that the spectrum of a quantum channel can be defined independently from its representations. 
In this section, we mainly work with superoperators (natural representation), but the Proposition~\ref{thm:non-p} still holds in other representations (e.g. the Pauli representation).

Unlike the Choi representation, the natural representation does not directly show a lot of critical properties of quantum channels, such as CP, TP, or HP. 
However, we found that the eigen-structure of the superoperator is essential for its property. 
\cref{lemma1} and \cref{lemma2} provide an insight into why Moore-Penrose inverse is not TP in certain cases.
Then, we prove that the Drazin-inverse for a TP map is always TP in~\cref{thm:tp}.

Denote the trace operation in the vector representation $\Sp{A}$ of a $d$ by $d$ matrix $A$ as $\str{\cdot}$, where $\str{\Sp{A}} \coloneqq \Tr(A)$.

\begin{lemma}\label{lemma1}
	If a linear map $\mc{N}: M_d \to M_d$ is trace preserving, the eigenvectors $v$ and generalized eigenvectors $v^g$ of eigenvalue $\lambda\neq 1$ of the superoperator $\mathbf{v}(\mc{N})$ is trace zero, i.e. $\str{v} = \str{v^g} = 0$.
\end{lemma}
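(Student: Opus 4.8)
The plan is to turn the statement into a single spectral fact about $\Sp{\mc{N}}$ by encoding the trace functional as an inner product. First I would identify the vector that implements $\str{\cdot}$: since $\Sp{I} = \sum_a e_a \otimes e_a$ and the standard inner product of $\Sp{I}$ with $\Sp{A}$ picks out exactly the diagonal entries $\sum_a A_{aa}$, one has $\str{v} = \Sp{I}^\dagger v$ for every vector $v$. With this in hand, the trace-preserving hypothesis $\Tr(\mc{N}(A)) = \Tr(A)$ becomes $\Sp{I}^\dagger \Sp{\mc{N}} \Sp{A} = \Sp{I}^\dagger \Sp{A}$ for all $A$; because the vectors $\Sp{A}$ span the whole space, this is equivalent to the operator identity $\Sp{I}^\dagger \Sp{\mc{N}} = \Sp{I}^\dagger$, i.e.\ $\Sp{I}$ is a left eigenvector of $\Sp{\mc{N}}$ with eigenvalue $1$. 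This identity is the engine of the entire argument.

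For a genuine eigenvector $\Sp{\mc{N}} v = \lambda v$, applying the left-eigenvector identity gives $\str{v} = \Sp{I}^\dagger v = \Sp{I}^\dagger \Sp{\mc{N}} v = \lambda \Sp{I}^\dagger v = \lambda\, \str{v}$, so $(1-\lambda)\str{v} = 0$ and hence $\str{v} = 0$ whenever $\lambda \neq 1$ (this also handles $\lambda = 0$ with no extra work). For generalized eigenvectors I would induct along a Jordan chain $v_1, \dots, v_k$ attached to $\lambda \neq 1$, normalized so that $\Sp{\mc{N}} v_1 = \lambda v_1$ and $\Sp{\mc{N}} v_j = \lambda v_j + v_{j-1}$ for $j \geq 2$, consistent with the superdiagonal-$1$ convention of the Jordan blocks $J_{\lambda_i}$. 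The base case is the eigenvector computation above; for the inductive step, assuming $\str{v_{j-1}} = 0$, the same manipulation yields $\Sp{I}^\dagger v_j = \Sp{I}^\dagger \Sp{\mc{N}} v_j = \lambda \Sp{I}^\dagger v_j + \Sp{I}^\dagger v_{j-1} = \lambda\, \str{v_j}$, so again $(1-\lambda)\str{v_j} = 0$ forces $\str{v_j} = 0$. Conceptually this is just the biorthogonality of the generalized eigenspaces: the left eigenvector $\Sp{I}$ for eigenvalue $1$ annihilates every generalized eigenspace $V_\lambda$ with $\lambda \neq 1$.

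The only real obstacle is the first step---pinning down $\str{\cdot} = \Sp{I}^\dagger$ under the specific vectorization convention $\Sp{E_{a,b}} = e_b \otimes e_a$, and recasting trace preservation as the left-eigenvalue equation $\Sp{I}^\dagger \Sp{\mc{N}} = \Sp{I}^\dagger$. Once that identity is established, both the eigenvector case and the Jordan-chain induction reduce to the same one-line algebraic manipulation, and the hypothesis $\lambda \neq 1$ enters solely through the invertibility of the factor $(1-\lambda)$.
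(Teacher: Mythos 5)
Your proof is correct and takes essentially the same route as the paper's: your left-eigenvector identity $\Sp{I}^\dagger \Sp{\mc{N}} = \Sp{I}^\dagger$ is just an explicit packaging of the paper's step ``since $\mc{N}$ is trace preserving, $\str{v} = \str{\Sp{\mc{N}}v}$.'' Both arguments then apply this to the eigenvector equation and induct along the Jordan chain, with $\lambda \neq 1$ entering only through division by the factor $(1-\lambda)$.
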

\begin{proof}
	For an eigenvector $v$ of $\Sp{\mc{N}}$, we have $\mathbf{v}(\mc{N})v = \lambda v$. Since $\mc{N}$ is trace preserving, $\str{v} = \str{\lambda v}$. And the eigenvalue $\lambda\neq 1$, we have $\str{v} = 0$
	
	For a $k$ by $k$ Jordan block of eigenvalue $\lambda^g$, where $k>1$, denote the first generalized eigenvector as $v^{g_1}$, we have 
	\begin{equation}\label{eq:gen_eig}
	[\mathbf{v}(\mc{N}) - \lambda^g I]v^{g_1} = v,
	\end{equation}
	where $v$ is the eigenvector corresponding to $\lambda^g$. 
	Taking the trace on both sizes, $\str{(\mathbf{v}(\mc{N}) - \lambda^g I)v^{g_1}} = \str{v}$, the left hand side is $\str{v^{g_1}-\lambda^gv^{g_1}} = (1-\lambda^g)\str{v^{g_1}}$, and the right hand side is zero from the argument above.
	Since $\lambda^g \neq 1$, $\str{v^{g_1}} = 0$.
	By deduction, all $v^{g_i}$ are trace zero for $i\in\{1,\cdots,k-1\}$.
\end{proof}

\begin{lemma}\label{lemma2}
	For a trace persevering linear map $\mc{N}: M_d \to M_d$, if there is a $k$ by $k$ ($k>1$) defective Jordan Block of eigenvalue $\lambda = 1$ in $\Sp{\mc{N}}$, the eigenvector $v$ and first $k-2$ generalized eigenvector $v^{g_{i}}$ has to be trace zero, i.e. $\str{v} = \str{v^{g_{i}}} = 0$ for $i\in\{1,\cdots,k-2\}$.
\end{lemma}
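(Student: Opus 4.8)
The plan is to mirror the proof of \cref{lemma1}, exploiting the Jordan-chain relations together with trace preservation, but paying close attention to the fact that $\lambda = 1$ makes the eigenvector equation degenerate. First I would record what trace preservation says in the vectorized picture: for every vector $w = \Sp{A}$ one has $\str{\Sp{\mc{N}} w} = \str{w}$, since $\Tr[\mc{N}(A)] = \Tr[A]$ and the vectorized trace functional $\str{\cdot}$ reproduces $\Tr$.

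Next I would lay out the Jordan chain of the size-$k$ block at $\lambda = 1$: an eigenvector $v$ with $\Sp{\mc{N}} v = v$, and generalized eigenvectors $v^{g_1}, \dots, v^{g_{k-1}}$ satisfying $[\Sp{\mc{N}} - I]v^{g_1} = v$ and $[\Sp{\mc{N}} - I]v^{g_{i+1}} = v^{g_i}$. The crucial observation, and the point where this differs from \cref{lemma1}, is that applying $\str{\cdot}$ to the eigenvector equation $\Sp{\mc{N}} v = v$ is vacuous: it merely returns $\str{v} = \str{v}$ because the factor $(1-\lambda)$ now vanishes. So trace-zero of $v$ cannot be read off from the eigenvector alone and must instead be extracted one level up the chain.

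The engine of the argument is then to apply $\str{\cdot}$ to each rewritten chain relation $\Sp{\mc{N}} v^{g_{i+1}} = v^{g_{i+1}} + v^{g_i}$. Trace preservation forces the left side to equal $\str{v^{g_{i+1}}}$, while linearity of $\str{\cdot}$ expands the right side to $\str{v^{g_{i+1}}} + \str{v^{g_i}}$; cancelling the common term gives $\str{v^{g_i}} = 0$. Taking the case coming from $v^{g_1}$ with the convention $v^{g_0} \coloneqq v$ yields $\str{v} = 0$, and running the same computation for the relations involving $v^{g_2}, \dots, v^{g_{k-1}}$ yields $\str{v^{g_i}} = 0$ for $i = 1, \dots, k-2$, exactly the claimed indices.

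The step I expect to be the only real subtlety is the bookkeeping at the top of the chain: because each relation $[\Sp{\mc{N}} - I]v^{g_{i+1}} = v^{g_i}$ pins down the trace of the \emph{lower} vector $v^{g_i}$, the deduction shifts down by one index, and there is no relation of the form $[\Sp{\mc{N}} - I]v^{g_k} = v^{g_{k-1}}$ available to constrain the final generalized eigenvector $v^{g_{k-1}}$. This is precisely why $v^{g_{k-1}}$ is excluded from the statement and its trace need not vanish. I would close by contrasting with \cref{lemma1}: there the nonzero factor $(1-\lambda)$ lets the very same manipulation already bite on the eigenvector equation and propagate to every vector in the chain, whereas here the $\lambda = 1$ degeneracy forces us up the chain and costs us the topmost vector.
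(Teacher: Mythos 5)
Your proof is correct and follows essentially the same route as the paper's: both rest on applying $\str{\cdot}$ to the Jordan-chain relations $[\Sp{\mc{N}} - I]v^{g_{i+1}} = v^{g_i}$, where trace preservation annihilates the left-hand side and thereby forces $\str{v^{g_i}} = 0$ for every vector below the top of the chain. The only difference is stylistic — the paper phrases the first step as a contradiction argument while you compute directly — and your closing remark about why $v^{g_{k-1}}$ escapes the conclusion matches the paper's ``except the last one.''
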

\begin{proof}
	Assume that $\str{v}\neq 0$. The first generalized eigenvector $v^g$ satisfy that $[\mathbf{v}(\mc{N}) - I]v^{g_1} = v$.
	Taking trace on both size, the left hand side equals to zero, and the right hand side does not equal to zero, leading to a contradiction. 
	The same argument holds for the rest of the generalized eigenvectors except the last one.
\end{proof}

From~\cref{lemma1} and~\cref{lemma2}, we know that all eigenvectors $v_\lambda$ for $\lambda\neq 1$ of a TP map has to be traceless. When $\lambda = 1$, if its algebraic multiplicity equals to its geometry multiplicity, $\str{\Sp{\mc{N}}v_\lambda} = \str{v_\lambda}$ (i.e. the trace of $v_\lambda$ will not be changed under the action of $\Sp{\mc{N}}$); if the algebraic multiplicity does not equal to the geometry multiplicity, the eigenvectors and generalized eigenvectors is traceless except for the last generalized eigenvector.
This tells us that the eigen-structure of the superoperator $\Sp{\mc{N}}$ is crucial for $\mc{N}$ to be TP. The way that we construct the Drazin-inverse $\mc{N}^+$ largely preserves the eigen-structure, while the Moore-Penrose inverse $\mc{N}^p$ focuses more on the singular value structure. It hints that $\mc{N}^+$ should be TP and $\mc{N}^p$ may not.

\begin{theorem}\label{thm:tp}
The Drazin-inverse $\mc{N}^+$ of a trace preserving map $\mc{N}$ is also trace preserving.
\end{theorem}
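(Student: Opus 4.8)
The plan is to recast trace preservation as a statement about a single \emph{left} eigenvector of the superoperator, and then check that this relation is inherited by the Drazin-inverse. Since $\str{\Sp{A}}=\Tr(A)$ is a fixed linear functional of $\Sp{A}$, there is a constant row covector $\omega$ (dual to the trace, built from the vectorized identity $\Sp{I}$) with $\omega\,\Sp{A}=\str{\Sp{A}}$ for every operator $A$. In this language $\mc{N}$ is trace preserving exactly when $\omega\,\Sp{\mc{N}}\Sp{A}=\omega\,\Sp{A}$ for all $A$, i.e. when $\omega\,\Sp{\mc{N}}=\omega$; thus $\omega$ is a left eigenvector of $\Sp{\mc{N}}$ for eigenvalue $1$. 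The whole theorem then reduces to proving the same identity $\omega\,\Sp{\mc{N}^{+}}=\omega$.

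Next I would insert the Jordan decomposition \cref{eq:decomp} and set $\tilde\omega\coloneqq\omega Q$. The hypothesis becomes $\tilde\omega J=\tilde\omega$ and the goal becomes $\tilde\omega J'=\tilde\omega$. The key link to the machinery already in place is that the $j$-th entry of $\tilde\omega$ is $\omega q_{j}=\str{q_{j}}$, the trace of the $j$-th (generalized) eigenvector stored in the columns of $Q$. These are precisely the quantities pinned down by \cref{lemma1} and \cref{lemma2}: $\tilde\omega$ vanishes on every Jordan block with eigenvalue $\lambda\neq1$, and inside a defective $\lambda=1$ block it vanishes on all but the last generalized eigenvector.

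Because $J$ and $J'$ are both block diagonal, I would verify $\tilde\omega J'=\tilde\omega$ one block at a time. On any block with $\lambda\neq1$ (including the nilpotent $\lambda=0$ blocks) the corresponding piece of $\tilde\omega$ is zero, so the identity holds there no matter how $J'$ is defined on that block --- this is where the freedom in the $\lambda=0$ construction (either $0_{k}$ or $N^{k-1}$) is seen to be harmless. On each $\lambda=1$ block the Drazin rule sets $J'_{\lambda}=J_{\lambda}^{-1}$, so right-multiplying $\tilde\omega_{i}J_{\lambda}=\tilde\omega_{i}$ by $J_{\lambda}^{-1}$ gives $\tilde\omega_{i}=\tilde\omega_{i}J'_{\lambda}$. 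Reassembling the blocks yields $\tilde\omega J'=\tilde\omega$, hence $\omega\,\Sp{\mc{N}^{+}}=\omega$, which is trace preservation of $\mc{N}^{+}$.

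The step needing the most care is the interaction between the nilpotent sector and the non-uniqueness of the Drazin construction: one must make sure trace preservation never relies on $J'$ inverting the $\lambda=0$ part, which is impossible. The resolution is exactly \cref{lemma1}, which forces $\tilde\omega$ to be supported only on $\lambda=1$ blocks, so the contractive and nilpotent spectrum never contribute. This also explains structurally why the Drazin-inverse, unlike the Moore-Penrose inverse, preserves the trace: it inverts precisely the $\lambda=1$ sector that carries the trace functional, while its behaviour on the rest of the spectrum is immaterial. A remaining routine check is fixing the Jordan-chain orientation so that it is indeed the last coordinate of each $\lambda=1$ block on which $\tilde\omega$ is permitted to be nonzero.
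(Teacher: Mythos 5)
Your proof is correct, and it takes a genuinely different route from the paper's. The paper argues on the columns of $Q$: it computes $\Sp{\mc{N}^+}v^{g_j}$ explicitly from the alternating-sum formula for the inverse Jordan block, takes traces, and then works through the cases $\lambda\notin\{0,1\}$, $\lambda=1$ (where the last generalized eigenvector needs a separate check), and $\lambda=0$ one at a time, finally extending by linearity since $Q$ is invertible. You instead dualize: trace preservation is the statement that the trace covector $\omega$ (the transpose of $\Sp{I}$) is a left fixed point, $\omega\,\Sp{\mc{N}}=\omega$, and after conjugating by $Q$ the whole theorem is the implication $\tilde\omega J=\tilde\omega\Rightarrow\tilde\omega J'=\tilde\omega$, checked block by block with no explicit inversion formula. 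Your version buys several things: the arbitrariness of $J'$ on nilpotent blocks is manifestly harmless; the defective $\lambda=1$ case costs nothing (just right-multiply by $J_\lambda^{-1}$), so \cref{lemma2} is in fact never needed; and even \cref{lemma1} could be bypassed, since $\tilde\omega_i(J_\lambda-I)=0$ with $J_\lambda-I$ invertible for $\lambda\neq1$ already forces $\tilde\omega_i=0$ --- your left-eigenvector relation is the common source of both lemmas. Relatedly, your closing worry about ``fixing the Jordan-chain orientation'' is moot: the $\lambda=1$ step never uses which coordinates of $\tilde\omega$ vanish, so no orientation check is required. What the paper's computation buys in exchange is the explicit action of $\Sp{\mc{N}^+}$ on each generalized eigenvector, recorded as a standalone formula, from which it also reads off that $\mc{N}^+\circ\mc{N}$ is trace preserving --- though that, too, follows in one line from your formulation, since $\omega\,\Sp{\mc{N}^+}\Sp{\mc{N}}=\omega\,\Sp{\mc{N}}=\omega$.
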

To prove that $\mc{N}^+$ is trace preserving,
we need to prove
\[
\str{\mathbf{v}(\mc{N}^+)v_\lambda} = \str{v_\lambda},
\]
for every eigenvectors and generlized eigenvectors $v_\lambda$ of $\Sp{\mc{N}}$ in $Q$. From the construction of $\mc{N}^+$, we almost get trace preserving for free. The proof can be found in~\cref{app:tp}. Moreover, it is easy to see from the proof that the composed map $\mc{N}^+\circ\mc{N}$ is also trace preserving.

\begin{example}%\label{exp:inv}
Here we give an example where the Moore-Penrose inverse $\mc{N}^p$ of a CPTP map is not TP, while the Drazin-inverse $\mc{N}^+$ is TP.
Consider a noise channel $\mc{N}$ whose Choi representation is given by
	\begin{equation}\label{eqn_ex2_noise}
	C(\mc{N})=\frac{1}{20}\left(\begin{array}{cc|cc}
	8 & 0 & 1 & 6 \\
	0 & 12 & 2 & -1 \\
	\hline 1 & 2 & 8 & 0 \\
	6 & -1 & 0 & 12
	\end{array}\right),
	\end{equation}
	and its superoperator is 
	\begin{equation*}
	\mathbf{v}(\mc{N})=\frac{1}{20}\left(\begin{array}{cccc}
	8 & 1 & 1 & 8 \\
	0 & 6 & 2 & 0 \\
	0 & 2 & 6 & 0 \\
	12 & -1 & -1 & 12
	\end{array}\right)
	\end{equation*}
	
	The Jordan normal form is given by $J = \text{diag}(0,1,\frac25,\frac15)$, and its inverse is $J' = \text{diag}(0,1,\frac52,5)$.
	
	The superoperator of Drazin-inverse $\mc{N}^{+}$ is 
	\begin{equation*}
	\mathbf{v}(\mc{N}^{+})=\left(\begin{array}{cccc}
	\frac25 & \frac5{16} & \frac5{16} & \frac25 \\
	0 & \frac{15}4 & -\frac54 & 0 \\
	0 & -\frac54 & \frac{15}4 & 0 \\
	\frac3{5} & -\frac5{16} & -\frac5{16} & \frac3{5}
	\end{array}\right)
	\end{equation*}
	
	The Choi representation of $\mc{N}^{+}$ is 
	\begin{equation*}
	C(\mc{N}^{+})=\left(\begin{array}{cc|cc}
	\frac25 & 0 & \frac5{16} & \frac{15}{4} \\
	0 & \frac3{5} & -\frac5{4} & -\frac5{16} \\
	\hline \frac5{16} & -\frac5{4} & \frac25 & 0 \\
	\frac{15}{4} & -\frac5{16} & 0 & \frac3{5}
	\end{array}\right)
	\end{equation*}
	
	%The eigenvalues of $C(\mc{N}^+)$ are $[-3.28983007, -0.7155248 ,  1.7155248 ,  4.28983007]$.
	The Choi representation has negative eigenvalues.
	Therefore, the channel $\mc{N}^+$ is trace preserving (partial trace of $C(\mc{N}^{+})$ is identity)%\todo{(junan: how can we see from this that $\mc{N}^+$ is TP?)}
	, Hermitian preserving ($C(\mc{N}^{+})$ is Hermitian), but not complete positive.
	
	The Moore-Penrose inverse of $\Sp{\mc{N}}$ is
	\[
	\Sp{\mc{N}^{p}} =
	\begin{pmatrix}
	\frac{115}{294} & \frac{10}{441} & \frac{10}{441} & \frac{505}{882} \\
	\frac{50}{147} & \frac{3245}{882} & -\frac{1165}{882} & -\frac{100}{441} \\
	\frac{50}{147} & -\frac{1165}{882} & \frac{3245}{882} & -\frac{100}{441} \\
	\frac{115}{294} & \frac{10}{441} & \frac{10}{441} & \frac{505}{882}
	\end{pmatrix},
	\]
	and its Choi representation is
	\[
	C(\mc{N}^{p}) = 
	\left(\begin{array}{rr|rr}
	\frac{115}{294} & \frac{50}{147} & \frac{10}{441} & \frac{3245}{882} \\
	\frac{50}{147} & \frac{115}{294} & -\frac{1165}{882} & \frac{10}{441} \\
	\hline \frac{10}{441} & -\frac{1165}{882} & \frac{505}{882} & -\frac{100}{441} \\
	\frac{3245}{882} & \frac{10}{441} & -\frac{100}{441} & \frac{505}{882}
	\end{array}\right),
	\]
	which is Hermitian preserving but not trace preserving.
	\label{exp:pseudo}

	\end{example}

To our knowledge, generalized non-CPTP inverses for non-invertible quantum channels have not been studied extensively in previous literature. 
The new understanding of the natural representation opens the possibility of studying the structures and properties of these maps from mathematical interests. 
Different constructions of generalized inverse maps bring new properties.
This research direction can provide a guideline for implementing these maps in EM and also in other areas of quantum information sciences.

\section{QEM in Quantum Computation}\label{sec_conseq}
In the previous section we have discussed how the nature of noise determines whether it is theoretically possible to fully recover the quantum and/or classical information, under the framework of classical and quantum communication. 
We explored the possibility of non-invertible noise and constructed a Drazin-inverse under such a case.
In this section, we will study the effects of recovery operations when performing QEM in quantum computation.
The task of quantum computation may be viewed as a modified version of communication, where Alice and Bob are no longer spatially separated, but the channel from Alice (the input) to Bob (the output) becomes a nontrivial unitary $\mc{U}$.

Using the intuition built from the communication setting, there can again be two situations depending on the goal.
In the first case, one is interested in obtaining the expectation value of some observable $\Tr[O \rho_{\text{out}}]$ from the output state.
We can call this ``quantum computation of classical information'', inspired by the usage of a similar term when discussing about quantum communication.
Accurately obtaining $\Tr[O \rho_{\text{out}}]$ for some observable $O$ is the goal of many quantum algorithms, such as the Harrow-Hassidim-Lloyd algorithm~\cite{Harrow2009} for solving linear systems and the variational quantum eigensolver algorithm~\cite{Peruzzo2014a} for estimating ground/excited state energies of molecules.
From our discussion before, it is now clear that both QEC and QEM can be useful for these types of problems.
Indeed, to the authors' knowledge, all preexisting QEM protocols have been developed towards solving these problems.

In the second case, the goal is to output a particular quantum state, which needs to be kept in coherence and perhaps sent to another party later.
This is particularly relevant in, for example, the problem of distributed quantum computing~\cite{Kimble2008,Wehner2018} or active quantum memories~\cite{Bacon2006,Terhal2015}.
This is thus the case of ``quantum computation of quantum information'', with a reasoning similar to the previous one.
Here, the idea of QEM cannot be used (at least locally at the output end), to reduce noise effects for these types of tasks, and only QEC is useful.

As mentioned previously, quantum computation has an additional layer of complexity that comes from the composition of gates.
In communication, it is known prior to sending particles that the ideal ``gate'' is the identity, so applying the inverse noise map would directly yield the ideal input distribution.
In computation, typically only a condensed description of the target unitary in the form of a quantum circuit is available prior to an experiment.
The full form of the target unitary map (and consequently, that of the noise process) is typically not known explicitly.
Instead, one usually have knowledge on components (e.g., the form of one- or two-qubit gates) in the circuit as well as the noise on these components.
Thus, QEM protocols reviewed in \cref{sec_review_QEM} were developed to implement this noise inversion procedure more efficiently in practice, for ``quantum computation of classical information'' tasks.
In the following we study this scenario in more detail, with a focus on the effects of imperfect knowledge on noise in \cref{sec_imperfect_noise}.

\subsection{QEM in Multi-layer Quantum Computation}\label{sec_multi_gate}
Consider again a noisy quantum circuit with depth $n$, where each layer can be represented by a unitary map $\mc{U}_{i}$ with $i=1,...,n$. The ideal output would be
\[
\rho_\text{out}^\text{ideal} = \mathcal{U}_n\circ\cdots\circ\mathcal{U}_1(\rho_\text{in}).
\]

In practice the gates $\mc{U}_{i}$ are implemented imperfectly.
Making the standard Markovian assumption on the noise, each imperfect $\mc{U}_{i}$ can be decomposed as $\mc{N}_{i} \mc{U}_{i}$, where each $\mc{N}_{i}$ is a CPTP map and can be distinct for different $i$. We thus have
\begin{equation}\label{eq:markovian}
\rho_{\text{out}}^{\text{exp}} = \mathcal{N}_n\circ \mathcal{U}_n\circ\cdots\circ\mathcal{N}_1\circ \mathcal{U}_1(\rho_\text{in})
\end{equation}
where $\rho_\text{in}$ is the input quantum state, $\rho_\text{out}$ is the quantum state came out of the noisy circuits, $\mathcal{U}_i$ are the desired operations, and $\mathcal{N}_i$ are the noise channels corresponding to gate $\mc{U}_{i}$.

To perform QEM, one first tries to learn (part or all of) the noise models, then recover the ideal gates through either physical or numerical means.
Thus, if we wish to analyze the performance of the \emph{best possible} QEM strategy, we may wish that all $\mathcal{N}_i$'s are known exactly.
But in reality, these $\mathcal{N}_i$'s are obtained from experiments either during the calibration stage or as part of the QEM process, which necessarily involves inaccuracies when being reconstructed.
Denote the experimentally characterized noise models $\tilde{\mathcal{N}}_i$, and let $\tilde{\mathcal{N}}_i^{-1}$ denote the inverse of $\tilde{\mathcal{N}}_i$.
In this section we will consider channels with the same input and output dimensions.

First, consider the case where $\mathcal{N}_i^{-1}$ exists and is CPTP for all $i$.
Recall that this is true \emph{iff} $\mathcal{N}_i$ is a unitary channel when the input and output dimensions equal.
Then in principle one can insert an additional gate implementing $\mathcal{N}_i^{-1}$ after each $\mc{U}_{i}$ to fully invert the noise effect~\cite{jiang2020physical}.
In reality, the experimentally obtained noise models are $\tilde{\mathcal{N}}_i$.
Thus, the output from this method will be
\begin{equation}\label{eq:rhoem_CPTP}
\rho_\text{EM} = \tilde{\mathcal{N}}_n^{-1}\circ\mathcal{N}_n\circ \mathcal{U}_n\circ\cdots\circ\tilde{\mathcal{N}}_1^{-1}\circ\mathcal{N}_1\circ \mathcal{U}_1(\rho_\text{in}).
\end{equation}

Naturally, there are two main sources of additional errors. 
First, the experimentally learned noise model $\tilde{\mathcal{N}}_i$ is not always equal to $\mathcal{N}_i$, so $\tilde{\mathcal{N}}_i \circ \mc{N}_{i}$ is not necessarily equal to the identity.
Second, even if $\mathcal{N}_i$ can be learned ideally, physically implementing $\mc{N}_{i}^{-1}$ will also not be ideal and can introduce extra errors.

Next, consider the case where $\mathcal{N}_i^{-1}$ exists but is not CPTP.
In this case it is impossible to physically restore the ideal output state.
However, we can still perform the inverse numerically to recover the output density matrix.
This can be thought of as first numerically inverting all the channels in \cref{eq:markovian}, then applying the ideal gates in the original order.
Specifically, we define the ideal ``reversal'' channel $\mc{R}$ to be the one that maps $\rho_{\text{out}}^{\text{exp}}$ to $\rho_{\text{in}}$, constructed as
\begin{equation}
	\mc{R} \coloneqq \mc{U}_1^\dag\circ\mc{N}_1^{-1}\circ \cdots \circ \mc{U}_n^\dag\circ\mc{N}_{n}^{-1}.
\end{equation}
Correspondingly, replacing $\mc{N}_{i}$ in the above by $\mc{\tilde{N}}_{i}$ gives the realistic reversal channel,
\begin{equation}
	\tilde{\mc{R}} \coloneqq \mc{U}_1^\dag\circ\mc{\tilde{N}}_1^{-1}\circ \cdots \circ \mc{U}_n^\dag\circ\tilde{\mc{N}}_{n}^{-1},
\end{equation}
which represents the experimentalist's best knowledge about the ideal reversal channel $\mc{R}$.
We thus have
\begin{align}\label{eq:nlayer}
	\rho_\text{EM} = \mc{U}_{n\cdots 1} \circ \tilde{\mc{R}} (\rho_\text{out}^\text{exp})
\end{align}
where the shorthand $\mc{U}_{n\cdots 1}: = \mc{U}_n\circ \cdots \circ \mc{U}_1$ is used for the ideal circuit sequence.
The composition of such channels first maps the experimental output state $\rho_\text{out}^\text{exp}$ back to the input state $\rho_\text{in}$, 
then perform the ideal operations $\mc{U}_{n\cdots 1}$; this is illustrated by the blue arrows in~\cref{fig:maps}.
The numerical inverse method does not involve implementing physical gates, but still require that noise processes are accurately characterized.
As reviewed in \cref{sec_review_QEM}, many current numerical QEM protocols can be categorized as trying to obtain the exact noise-inverted output, meaning that their optimal performance is upper bounded by \cref{eq:nlayer}.

A naive numerical implementation of the channel inverse requires simulating the quantum circuit $\mc{U}_{n\cdots 1}$, which is naturally expensive.
Generally speaking, the computational complexity for computing \cref{eq:nlayer} can be higher than classically simulating the ideal circuit, even without including the cost of characterizing noise channels. 
Therefore, directly computing such an inverse channel is not efficient for mitigating error in practice. 
However, comparing the error mitigated results with classically simulated ones may reveal how precise our knowledge about the device noise is.
Moreover, the result from this ``optimal method'' upper bounds the performance of any error mitigation protocol.

Finally, we mention briefly that if only an approximate version of the ideal output is wanted, it may be sufficient to apply one effective recovery map $\mathcal{N}_\text{eff}^{-1}$ to the noisy output state, in the hope that it will eliminate most of the noise effects.
The mitigated output from this approximate method is given by
\begin{equation}
\rho_\text{EM} = \mathcal{N}_\text{eff}^{-1}(\rho_\text{out}^\text{exp}).
\end{equation}
The effective recover map $\mathcal{N}_\text{eff}^{-1}$ may be applied either physically or numerically, depending on the situation.
Typically, $\mathcal{N}_\text{eff}^{-1}$ contains a few tunable parameters which can be experimentally optimized to achieve the best noise-mitigating performance on some test runs.
Methods that fall into this category include decoherence compensation in NMR experiments, and depolarizing-model-based EM~\cite{vovrosh2021efficient}.
Recent work also considered continuous inversion through the Petz recovery map~\cite{kwon2021reversing}.
Although it leads to resource savings in practice, this general approach can be problematic in some instances due to channel mismatching. We provide one such example in~\cref{exp:mismatch}.

\subsection{QEM with Imperfect Knowledge on Noise}\label{sec_imperfect_noise}
Next we study the effects of imperfectly characterized noise channels on the performance of QEM.
It is generally acknowledged that characterizing noise models in a quantum system is highly resource demanding~\cite{nielsen2021gate}.
In many current error mitigation protocols, the noise channel is assumed to be particular models~\cite{2021simplemitigation}, such as a depolarizing channel $\mc{D}$. 
It is then natural to ask the question of how incorrectly characterized noise channels $\{\mc{N}_i\}$ would affect the mitigation outcome.
As mentioned before, we will assume all $\{\mc{N}_i\}$'s to be invertible in this subsection. 

\cref{fig:maps} shows an illustrative diagram showing the relationship between different objects that we will discuss about.
One sees that while the ideal circuits and the experimental operations are CPTP maps, the channels $\mc{R}$ and $\tilde{\mc{R}}$ are not necessarily CPTP anymore.
The gap $\tN_i - \mc{N}_i$ between the estimations $\tN_i$ and the actual channels $\mc{N}_i$ upper bounds the result of EM, independent of how the inverses are achieved. 
And this gap only affects the difference between $\mc{R}$ and $\tilde{\mc{R}}$.

\begin{figure}[ht]
	\centering
	\includegraphics[width = \linewidth]{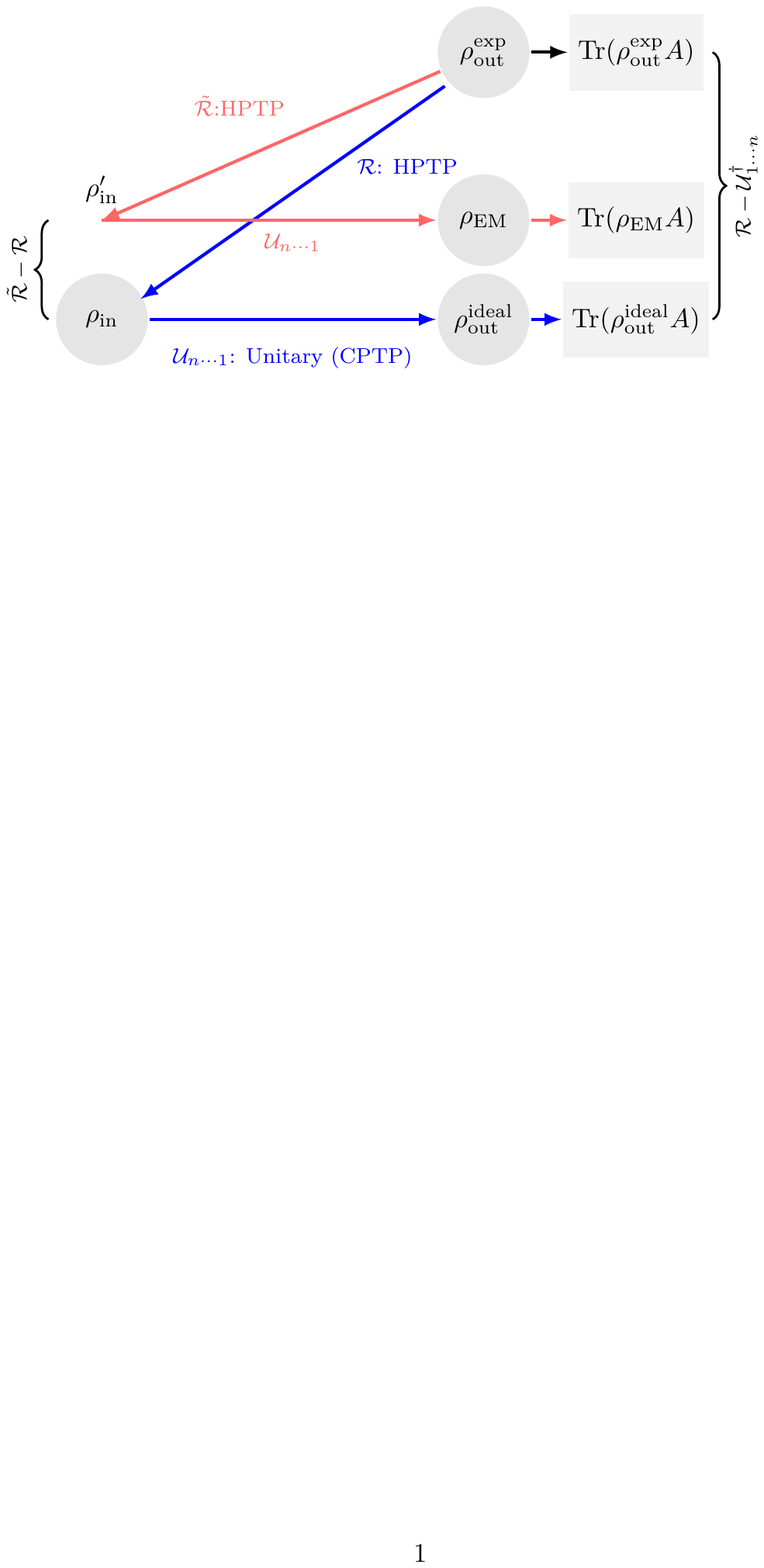}
	\caption{The schematic diagram of maps. 
	The blue arrows indicate the map $\mc{U}_{n\cdots 1}\circ \mc{R}$ for ideal error mitigation, and the red arrows indicate the map $\mc{U}_{n\cdots 1}\circ \tilde{\mc{R}}$ for error mitigation with imperfect noise characterization. 
	The error between actual noise channels $\mc{N}_i$ and estimations $\tN_i$ cause the difference between $\mc{R}$ and $\tilde{\mc{R}}$, which leads to a deviation in the mitigated result.
	}\label{fig:maps}
\end{figure}

From the perspective of output states, the goal of EM is to bring the output states closer to the ideal. 
In terms of state fidelity, this is to ensure that 
\begin{equation}\label{eq:fid_state}
F(\rho_\text{EM},\rho^\text{ideal}_\text{out})> F(\rho^\text{exp}_\text{out},\rho^\text{ideal}_\text{out}),
\end{equation}
where $F(\rho_1,\rho_2) \coloneqq \tr(\sqrt{\sqrt{\rho_1}\rho_2\sqrt{\rho_1}})$ is the fidelity between $\rho_1$ and $\rho_2$.

If the actual noise channels $\{\mc{N}_i\}$ are invertible and the noise characterization is perfect ($\tN_i = \mc{N}_i$), theoretically the errors can be perfectly mitigated, with \cref{eq:fid_state} naturally satisfied. 
Realistically, $\tN_i \neq \mc{N}_i$, which opens the gap between ideal output states $\rho^\text{ideal}_\text{out}$ and error mitigated state $\rho^\text{EM}$. 
We  next answer the question of how much will the imperfections in characterizing $\mc{N}$ worsen the fidelity.

Let $\Delta \mc{N}_i \coloneqq \tN_i - \mc{N}_i$ and $\Delta \mc{N}_i^{-1} \coloneqq \tN_i^{-1} - \mc{N}_i^{-1}$.
Note that $\Delta \mc{N}_i $ and $\Delta \mc{N}_i^{-1}$ are related by $\Delta \mc{N}_i \tN_i^{-1} + \mc{N}_i\Delta \mc{N}^{-1}_i = 0$.
We mainly use $\Delta \mc{N}_i^{-1}$ in later discussion.

\cref{fig:maps} shows that the errors $\{\Delta \mc{N}_i^{-1}\}$ only affect $\mc{R}$ and $\tilde{\mc{R}}$ in the error mitigation maps.
The difference between $\rho_\text{EM}$ and $\rho_\text{out}^\text{ideal}$ is
\begin{align}\label{eq:error}
\rho_\text{EM} - \rho_\text{out}^\text{ideal} & = \mc{U}_{n\cdots 1}\circ \tilde{\mc{R}}(\rho_\text{out}^\text{exp}) - \mc{U}_{1\cdots n}(\rho_\text{in}) \nonumber\\
%&= \mc{E}^{-1}_\text{EM}(\rho_\text{out}^\text{exp}) - \mc{E}^{-1}_\text{EM-ideal}(\rho_\text{out}^\text{exp}) \nonumber\\
& = \mc{U}_{n\cdots 1}\circ\left[\tilde{\mc{R}} - \mc{R} \right](\rho_\text{out}^\text{exp})\\
& \coloneqq \mc{U}_{n\cdots 1}\circ \Delta \mc{N} (\rho_\text{out}^\text{exp}).
\end{align}

In the middle bracket in~\cref{eq:error}, the errors $\{\Delta \mc{N}_i^{-1}\}$ scramble in the layers of unitaries $\mc{U}_i^\dag$. 
Denote the first order estimation of $\Delta\mc{N}$ to be $\Delta \mathcal{N}^{(1)},$ where each term in $\Delta \mathcal{N}^{(1)}$ only contain one of $\Delta \mc{N}_i^{-1}$ (see~\cref{eq:delta_rho} in~\cref{app:imperfect} for the explicit expression).
The first order error between states is $\Delta \rho_\text{EM} \coloneqq \mc{U}_{1\cdots n}\circ \Delta\mc{N}^{(1)}(\rho_\text{out}^\text{exp})$. 
We then define $F(\rho_\text{EM}, \rho_\text{EM}+\Delta\rho_\text{EM})$ to be the first order estimation $F^{(1)}(\rho_\text{EM},\rho_\text{out}^\text{ideal})$ of the state fidelity $F(\rho_\text{EM},\rho_\text{out}^\text{ideal})$. %Let $ \coloneqq F(\rho_\text{EM}, \rho_\text{EM}+\Delta\rho_\text{EM})$.
The following proposition gives a bound on this quantity.

\begin{prop}\label{prop:fid}
	The first order estimation of fidelity between $\rho_\text{EM}$ and $\rho_\text{out}^\text{ideal}$ is %\jl{we need to define what first order means}
	\begin{align}\label{eq:fid_bound}
	& \left(1 - \frac{1}{2}\sqrt{d} C_\text{exp} \left\|\mathbf{v}(\Delta \mc{N}^{(1)})\right\|\right)^2 \le F^{(1)}(\rho_\text{EM},\rho_\text{out}^\text{ideal}) \nonumber \\
	&\le 1-\frac14 \left(l_U\cdot \left\|\mathbf{v}(\Delta \mc{N}^{(1)})\Sp{\rho_\text{out}^\text{exp}}\right\|\right)^2,
	\end{align}
	where $C_\text{exp}\coloneqq \|\Sp{\mc{U}_{n\cdots 1}}\|\cdot\|\Sp{\rho_\text{out}^\text{exp}}\|$ is an experiment-related constant, %and $\left\|\Sp{\Delta \mathcal{N}^{(1)}}\right\|$.
	and $l_U \coloneqq \inf_{\|x\| = 1} \|\mathbf{v}(\mc{U}_{n\cdots1}) x\|$ is the lower Lipschitz constant  of the ideal operations $\mc{U}_{n\cdots 1}$.
	The norm $\|\cdot\|$ is 2-norm for vectors and is the induced matrix norm for matrices.
\end{prop}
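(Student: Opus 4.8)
The plan is to reduce the fidelity estimate to a bound on the trace distance $\tfrac12\|\Delta\rho_\text{EM}\|_1$ between $\rho_\text{EM}$ and $\rho_\text{EM}+\Delta\rho_\text{EM}$, and then convert that trace-norm quantity into the natural-representation norms appearing in the statement. The bridge between fidelity and trace distance is the pair of Fuchs--van de Graaf inequalities,
\begin{equation}
1 - F^{(1)} \le \tfrac12\|\Delta\rho_\text{EM}\|_1 \le \sqrt{1-\left(F^{(1)}\right)^2},
\end{equation}
so the left inequality controls a lower bound on the fidelity and the right one controls an upper bound. The rest of the work is purely to estimate $\|\Delta\rho_\text{EM}\|_1$ from above and from below in terms of $\mathbf{v}(\Delta\mc{N}^{(1)})$, using the factorization $\mathbf{v}(\Delta\rho_\text{EM}) = \mathbf{v}(\mc{U}_{n\cdots1})\,\mathbf{v}(\Delta\mc{N}^{(1)})\,\mathbf{v}(\rho_\text{out}^\text{exp})$ that follows directly from the defining property of the natural representation.

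For the lower bound on $F^{(1)}$, I would bound the trace distance from above. Since $\Delta\rho_\text{EM}$ is a $d\times d$ operator, the Schatten-norm comparison $\|\Delta\rho_\text{EM}\|_1 \le \sqrt{d}\,\|\Delta\rho_\text{EM}\|_2$, together with the fact that the Hilbert--Schmidt norm equals the vector $2$-norm of the vectorization, $\|\Delta\rho_\text{EM}\|_2 = \|\mathbf{v}(\Delta\rho_\text{EM})\|$, gives $\|\Delta\rho_\text{EM}\|_1 \le \sqrt{d}\,\|\mathbf{v}(\Delta\rho_\text{EM})\|$. Applying submultiplicativity of the induced $2$-norm to the factorized vector yields $\|\mathbf{v}(\Delta\rho_\text{EM})\| \le \|\mathbf{v}(\mc{U}_{n\cdots1})\|\,\|\mathbf{v}(\Delta\mc{N}^{(1)})\|\,\|\mathbf{v}(\rho_\text{out}^\text{exp})\| = C_\text{exp}\,\|\mathbf{v}(\Delta\mc{N}^{(1)})\|$. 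Feeding this into the left Fuchs--van de Graaf inequality produces the claimed lower bound, the outer squaring being harmless once $\tfrac12\sqrt{d}\,C_\text{exp}\|\mathbf{v}(\Delta\mc{N}^{(1)})\|\le 1$.

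For the upper bound on $F^{(1)}$, I would instead bound the trace distance from below, using the opposite Schatten comparison $\|\Delta\rho_\text{EM}\|_1 \ge \|\Delta\rho_\text{EM}\|_2 = \|\mathbf{v}(\Delta\rho_\text{EM})\|$, and then peeling off the unitary layers with the lower Lipschitz constant: by definition $\|\mathbf{v}(\mc{U}_{n\cdots1})\,y\| \ge l_U\,\|y\|$ for every $y$, so taking $y = \mathbf{v}(\Delta\mc{N}^{(1)})\,\mathbf{v}(\rho_\text{out}^\text{exp})$ gives $\|\mathbf{v}(\Delta\rho_\text{EM})\| \ge l_U\,\|\mathbf{v}(\Delta\mc{N}^{(1)})\,\mathbf{v}(\rho_\text{out}^\text{exp})\|$. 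Substituting into the right Fuchs--van de Graaf inequality, in the form $\left(F^{(1)}\right)^2 \le 1 - \tfrac14\|\Delta\rho_\text{EM}\|_1^2$, delivers the stated upper bound.

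The main obstacle I anticipate is bookkeeping rather than anything conceptual. First, one must pin down the fidelity normalization: the two displayed bounds are mutually consistent only when the quantity being estimated is read as the squared Uhlmann fidelity, so the Fuchs--van de Graaf inequalities have to be applied with the matching convention. Second, one has to justify that $F^{(1)}$, defined through the perturbed operator $\rho_\text{EM}+\Delta\rho_\text{EM}$, is a legitimate fidelity; since each $\Delta\mc{N}_i = \tN_i-\mc{N}_i$ is a difference of trace-preserving maps, $\Delta\rho_\text{EM}$ is traceless and $\rho_\text{EM}+\Delta\rho_\text{EM}$ has unit trace, but its positivity holds only to the working (first) order, which is precisely the regime in which the proposition is phrased. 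The explicit first-order form of $\Delta\mc{N}^{(1)}$ (deferred to the appendix) is needed only to make the factorization of $\mathbf{v}(\Delta\rho_\text{EM})$ precise; the norm estimates themselves are insensitive to its detailed structure.
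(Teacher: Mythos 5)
Your proposal is correct and follows essentially the same route as the paper's own proof in Appendix~B: the Fuchs--van de Graaf inequalities reduce the fidelity estimate to the trace norm of $\Delta\rho_\text{EM}^{(1)}$, which is then sandwiched via $\|A\|_F \le \|A\|_\text{tr} \le \sqrt{d}\,\|A\|_F$ and the identity $\|A\|_F = \|\mathbf{v}(A)\|$, with submultiplicativity of the factorization $\mathbf{v}(\mc{U}_{n\cdots1})\mathbf{v}(\Delta\mc{N}^{(1)})\mathbf{v}(\rho_\text{out}^\text{exp})$ giving the bound through $C_\text{exp}$ and the lower Lipschitz constant $l_U$ giving the other direction. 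Your remark on the fidelity normalization is apt: the stated bounds correspond to the squared-fidelity form of Fuchs--van de Graaf even though the main text defines $F$ in the square-root convention, a point the paper's proof applies without comment.
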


We can see that $F^{(1)}(\rho_\text{EM},\rho_\text{out}^\text{ideal})$ is bounded by $\Delta \mathcal{N}^{(1)}$ and experimental constants (including norms of the ideal circuits and the Frobenius norm of the experimental outcome state $\rho^\text{exp}_\text{out}$).
Therefore, by bounding the errors $\{\Delta \mc{N}^{-1}_i\}$ in channel estimation, one can constrain the fidelity by using~\cref{eq:fid_bound}.
In fact, this result can be understood easily from the left-hand side of~\cref{fig:maps} -- closing the gap between $\tilde{\mc{R}}$ and $\mc{R}$ can bring $\rho_\text{in}'$ and $\rho_\text{in}$ closer, therefore bounding the fidelity afterwards.
Further details can be found in~\cref{app:imperfect}.

\begin{figure}[ht]
	\centering
	\includegraphics[width = .6\linewidth]{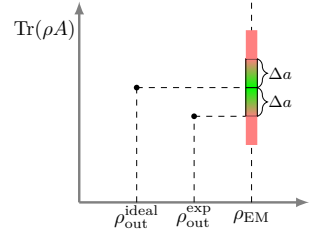}
	\caption{A schematic diagram for improving the expectation value $\Tr(\rho A)$ using QEM. 
	$\Delta a$ denotes the difference (in absolute value) between the ideal output, $\rho_\text{out}^\text{ideal}$, and the actual output $\rho_\text{out}^\text{exp}$. 
	The goal of QEM is to achieve a mitigated output state $\rho_\text{EM}$, such that $\Tr(\rho_\text{EM} A)$ in the green zone for an observable $A$ of interest.}\label{fig:sch}
\end{figure}

If the task realized by the given circuit only concerns the expectation value of a set of observables $\{A_i\}$, then the goal of QEM can be simplified as recovering the ideal expectation value, $\Tr(\rho_{\text{out}}^{\text{ideal}} A_i)$.
As shown in~\cref{fig:sch}, one would like the error mitigated result to be closer to the ideal than the one directly coming from experiments. 
Since one cannot perfectly characterize the noise models $\mc{N}_i$, it is desirable to know the condition which guarantees $\Tr(\rho_\text{EM} A)$ to land in the green zone.
We show in~\cref{app:exp_val} that the following is a sufficient condition for such a goal. 
\begin{prop}\label{prop:exp_val}
	If the following condition~\cref{eq:suff} is satisfied, 
	Quantum Error Mitigation has the ability
	to improve the expectation value of any observable $A$ for any circuit $\mc{U}_{n\cdots 1}$.
	\begin{equation}\label{eq:suff}
	\left\|\Sp{\Delta\mc{N}}\right\| \le l_\text{ideal-exp},
	\end{equation}
	where $l_\text{ideal-exp}\coloneqq \inf_{\norm{x} = 1}\norm{ \Sp{\mc{R} - \mc{U}^\dag_{1\cdots n}}x }$ is the lower Lipschitz constant of $\Sp{\mc{R} - \mc{U}^\dag_{1\cdots n}}$.
\end{prop}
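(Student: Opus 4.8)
The plan is to convert the informal goal ``improve the expectation value of any observable'' into a single comparison of $2$-norms of the vectorized state errors. For any observable $A$ and any two states one has $|\Tr((\rho_1-\rho_2)A)| = |\langle \Sp{A}, \Sp{\rho_1}-\Sp{\rho_2}\rangle| \le \|\Sp{A}\|\,\|\Sp{\rho_1}-\Sp{\rho_2}\|$, and this Cauchy--Schwarz bound is saturated by the worst-case (aligned) observable. Hence the guaranteed expectation-value deviation over all normalized observables is exactly $\|\Sp{\rho_1}-\Sp{\rho_2}\|$, and the whole proposition reduces to establishing the state-level inequality
\[
\|\Sp{\rho_\text{EM}}-\Sp{\rho_\text{out}^\text{ideal}}\| \le \|\Sp{\rho_\text{out}^\text{exp}}-\Sp{\rho_\text{out}^\text{ideal}}\|.
\]

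I would first rewrite both sides as the image of the common vector $\Sp{\rho_\text{out}^\text{exp}}$ under an explicit superoperator. For the mitigated error, \cref{eq:error} already gives $\rho_\text{EM}-\rho_\text{out}^\text{ideal} = \mc{U}_{n\cdots 1}\circ\Delta\mc{N}(\rho_\text{out}^\text{exp})$, so $\Sp{\rho_\text{EM}}-\Sp{\rho_\text{out}^\text{ideal}} = \Sp{\mc{U}_{n\cdots 1}}\,\Sp{\Delta\mc{N}}\,\Sp{\rho_\text{out}^\text{exp}}$. For the experimental error, I would use the defining reversal property $\mc{R}(\rho_\text{out}^\text{exp})=\rho_\text{in}$ together with the fact that $\mc{U}^\dag_{1\cdots n}=(\mc{U}_{n\cdots 1})^{-1}$ is the inverse ideal circuit, so $\mc{U}^\dag_{1\cdots n}(\rho_\text{out}^\text{ideal})=\rho_\text{in}$. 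Subtracting yields the key identity
\[
(\mc{R}-\mc{U}^\dag_{1\cdots n})(\rho_\text{out}^\text{exp}) = -\,\mc{U}^\dag_{1\cdots n}\!\left(\rho_\text{out}^\text{exp}-\rho_\text{out}^\text{ideal}\right),
\]
which re-expresses the raw error through the operator $\mc{R}-\mc{U}^\dag_{1\cdots n}$ appearing in the definition of $l_\text{ideal-exp}$.

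Next I would invoke that $\mc{U}_{n\cdots 1}$ and its inverse are unitary channels, whose natural representations are unitary matrices and therefore preserve the vector $2$-norm. Applying this to the two displayed identities gives $\|\Sp{\rho_\text{EM}}-\Sp{\rho_\text{out}^\text{ideal}}\| = \|\Sp{\Delta\mc{N}}\Sp{\rho_\text{out}^\text{exp}}\|$ and $\|\Sp{\rho_\text{out}^\text{exp}}-\Sp{\rho_\text{out}^\text{ideal}}\| = \|\Sp{\mc{R}-\mc{U}^\dag_{1\cdots n}}\Sp{\rho_\text{out}^\text{exp}}\|$. I would then bound the first by submultiplicativity of the induced norm, $\|\Sp{\Delta\mc{N}}\Sp{\rho_\text{out}^\text{exp}}\| \le \|\Sp{\Delta\mc{N}}\|\,\|\Sp{\rho_\text{out}^\text{exp}}\|$, and bound the second from below by the very definition of the lower Lipschitz constant, $\|\Sp{\mc{R}-\mc{U}^\dag_{1\cdots n}}\Sp{\rho_\text{out}^\text{exp}}\| \ge l_\text{ideal-exp}\,\|\Sp{\rho_\text{out}^\text{exp}}\|$. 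Chaining these with the hypothesis~\cref{eq:suff}, namely $\|\Sp{\Delta\mc{N}}\|\le l_\text{ideal-exp}$, delivers the target norm inequality and hence the proposition.

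The main obstacle is conceptual rather than computational: one must pin down what ``improve the expectation value of any observable'' means and reduce it to the single norm inequality above, because the literal reading $|\Tr(\rho_\text{EM}A)-\Tr(\rho_\text{out}^\text{ideal}A)|\le|\Tr(\rho_\text{out}^\text{exp}A)-\Tr(\rho_\text{out}^\text{ideal}A)|$ for \emph{every} $A$ is too strong to follow from a norm comparison alone (a single observable can align with the mitigated-error direction while being nearly orthogonal to the experimental one). The faithful interpretation is the worst-case-over-observables reduction, equivalently the Frobenius-norm contraction of the state difference. The one genuinely technical point to verify with care is the experimental-error identity together with the norm-preservation of the unitary channels' natural representations; the remaining ingredients are just Cauchy--Schwarz, submultiplicativity of the induced $2$-norm, and the definition of the lower Lipschitz constant.
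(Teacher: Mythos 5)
Your proposal is correct under the reading you adopt, and it takes a genuinely different route from the paper's own proof --- different in a way that matters. The paper keeps the observable $A$ in play throughout: it rewrites the two sides of the target inequality $|\Tr(\rho_\text{EM}A)-\Tr(\rho_\text{out}^\text{ideal}A)| \le |\Tr(\rho_\text{out}^\text{ideal}A)-\Tr(\rho_\text{out}^\text{exp}A)|$ as the inner products
\[
\left|\left\langle \Sp{\mc{U}^\dag_{n\cdots1}}\Sp{A^\dag},\,\Sp{\Delta \mc{N}}\Sp{\rho_\text{out}^\text{exp}}\right\rangle\right|
\quad\text{and}\quad
\left|\left\langle \Sp{\mc{U}^\dag_{n\cdots1}}\Sp{A^\dag},\,\Sp{\mc{R} - \mc{U}^\dag_{1\cdots n}}\Sp{\rho_\text{out}^\text{exp}}\right\rangle\right|,
\]
bounds the first from above by Cauchy--Schwarz plus submultiplicativity (valid), and then bounds the second from \emph{below} by $\|\Sp{\mc{U}^\dag_{n\cdots1}}\Sp{A^\dag}\|\,\|\Sp{\rho_\text{out}^\text{exp}}\|\,l_\text{ideal-exp}$. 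That last step is exactly the move you flagged as impossible: an inner product $|\langle u, Mv\rangle|$ admits no such lower bound for arbitrary $u$ --- it vanishes when $u\perp Mv$ even though $\|Mv\|\ge l_\text{ideal-exp}\|v\|$ --- so the paper's displayed inequalities do not actually deliver the per-observable conclusion. Your route sidesteps this: you eliminate $A$ at the outset by passing to the worst case over normalized observables (legitimate, since the state differences are Hermitian and the maximizing $A$ is therefore a valid observable), which converts the claim into the Frobenius-norm contraction $\|\rho_\text{EM}-\rho_\text{out}^\text{ideal}\|_F\le\|\rho_\text{out}^\text{exp}-\rho_\text{out}^\text{ideal}\|_F$; then unitary invariance of $\Sp{\mc{U}_{n\cdots1}}$, the identity $(\mc{R}-\mc{U}^\dag_{1\cdots n})(\rho_\text{out}^\text{exp})=-\mc{U}^\dag_{1\cdots n}(\rho_\text{out}^\text{exp}-\rho_\text{out}^\text{ideal})$, submultiplicativity, and the definition of $l_\text{ideal-exp}$ close the argument with every step valid. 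As for what each approach buys: the paper's argument, had it worked, would give the stronger guarantee that \emph{each individual} observable improves; yours gives the worst-case guarantee, which is in fact the most that hypothesis~\cref{eq:suff} can support, since \cref{eq:suff} constrains only the size of $\Sp{\Delta\mc{N}}$ and not the direction of the residual error --- an observable aligned with the mitigated-error direction but nearly orthogonal to the experimental-error direction can still get worse. So your proof is not merely an alternative; it is the repaired version of the paper's argument, proving precisely the statement the hypothesis can deliver.
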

In the above result, the channels $\mc{R}$ and $\mc{U}^\dag_{1\cdots n}$ maps $\rho_\text{out}^\text{exp}$ and $\rho_\text{out}^\text{ideal}$ back to $\rho_\text{in}$ respectively.
The condition~\cref{eq:suff}, in general, is requiring $\Delta\mc{N}$ to be smaller than $\mc{R} - \mc{U}^\dag_{1\cdots n}$.
It is straightforward to observe from the brackets in~\cref{fig:maps}.
Since this proposition is for any observables and any circuit, it will also work for quantum state fidelity.

Note that \cref{eq:suff} is a stringent requirement. 
If $\Sp{\mc{R} - \mc{U}^\dag_{1\cdots n}}$ has a nontrivial null space, then it will force the noise channel estimation $\tN_i$ to be perfect, i.e. $\tN_i = \mc{N}_i$ for $\forall i\in \{1,\cdots,n\}$. 
We do not make extra assumptions on circuits and noises while deriving this sufficient condition.
Knowing more information about the circuit and noises can loosen the requirement.

Roughly speaking, for an error mitigation protocol to improve the experimental outcome, our knowledge of the noise channels needs to be more accurate than the quality of the experiment.
Since the unitary $\mc{U}_{n\cdots 1}$ is an isometry under Frobenius norm ($\|M\|_F \coloneqq \sqrt{\sum_{ij}|M_{ij}|^2}$), the difference between error mitigated state $\rho_\text{EM}$ and ideal output state $\rho_\text{out}^\text{ideal}$ equals to $\Delta_\rho \coloneqq \|\rho_\text{EM} - \rho_\text{out}^\text{ideal}\|_F  = \|\Delta\mc{N}(\rho^\text{exp}_\text{out})\|_F = \|\tilde{\mc{R}}(\rho_\text{out}^\text{exp}) - \rho_\text{in}\|_F$.
The error $\Delta$ of error mitigation is bounded by $\Delta_\rho$:
\begin{align}
\Delta & = \tr[A(\rho_\text{EM} - \rho_\text{out}^\text{ideal})] \nonumber\\
&\le \|A\|_F\cdot\|\rho_\text{EM} - \rho_\text{out}^\text{ideal}\|_F \nonumber\\
%& = \|A\|_F\cdot \|\tilde{\mc{R}}(\rho_\text{out}^\text{exp}) - \rho_\text{in}\|_F \\ 
& = \|A\|_F\cdot\Delta_\rho.\label{eq:delta}
\end{align}
Assume that the input state $\rho_\text{in}$ is the all-zero state $\ketbra{0,\cdots,0}{0,\cdots,0}$, i.e. $\rho_\text{in}$ is a very sparse matrix with 1 as the first entry and 0 elsewhere.
Note that $\tilde{\mc{R}}$ is noisy thus can be efficiency computed using tensor network methods, and $\rho_\text{out}^\text{ideal}$ can be constructed by classical shadow tomography~\cite{huang2020predicting}. 
The upper bound~\cref{eq:delta} of value $\Delta$ can be efficiently computed.

Finally, we discuss potential consequences of incorrect assumptions about the actual noise.
Normally, certain noise models are assumed while identifying device noise.
The assumptions made on noise models lead to savings in parameters and resources in characterization. 
However, the distance between the actual noise $\mc{N}$ in the system and the model assumed will not be arbitrarily close, which opens a gap between the ideal outcomes and error mitigated outcomes of the given circuit.
In particular, if the error model is overly simplified, it can cause problems on EM performance.

We consider the following simple example of a depth-$1$, single qubit quantum channel, where the actual noise $\mc{N}$ is a Pauli Channel, but a depolarizing channel is assumed when mitigating error. 
\begin{example}\label{exp:mismatch}
Suppose one believes that the noise in the system is mainly depolarizing, and tries to use the depolarizing channel $\mc{D}$ to approximate the actual noise.
After fitting the parameters in $\mc{D}$, the inverse $\mc{D}^{-1}$ of the estimated $\mc{D}$ is used to recover information (i.e. $\mc{D}^{-1}\circ\mc{N}(\rho)$).

The Kraus representation of $\mc{N}$ and $\mc{D}$ are $\mc{N} : \{\sqrt{p_1}I, \sqrt{p_2} X, \sqrt{p_3} Y, \sqrt{(1-p_1-p_2-p_3)}Z\}$ and $\mc{D} : \{\sqrt{1-\frac{3\lambda}{4}}I, \sqrt{\frac{\lambda}{4}} X, \sqrt{\frac{\lambda}{4}} Y, \sqrt{\frac{\lambda}{4}}Z \}$.
For a given set of $\{p_1,p_2,p_3\}$, the optimal $\lambda$ to minimize $\|\mc{N}-\mc{D}\|_?$ varies according to different representations and different choices of norm, $\|\cdot\|_?$. 
The symmetry on the parameters in $\mc{D}$ makes it impossible to perfectly capture the noise $\mc{N}$ for $p_i$'s that do not have such a symmetry.

Note that the two vectors, $\vec n \coloneqq (\sqrt{p_1}, \sqrt{p_2}, \sqrt{p_3}, \sqrt{(1-p_1-p_2-p_3)})$ and $\vec d \coloneqq(\sqrt{1-\frac{3\lambda}{4}}, \sqrt{\frac{\lambda}{4}}, \sqrt{\frac{\lambda}{4}}, \sqrt{\frac{\lambda}{4}})$, are also representations for $\mc{N}$ and $\mc{D}$ respectively.
Since $\vec n$ and $\vec d$ are normalized, minimizing the distance between $\mc{N}$ and $\mc{D}$ is equivalent to maximizing $\vec n \cdot \vec d$. i.e. 
\begin{align*}
\max_{\lambda\in [0,1]} & \left\{\sqrt{p_1 (1-\frac{3\lambda}{4})}\right. \\
& \left. +[\sqrt{p_2}+\sqrt{p_3}+\sqrt{(1-p_1-p_2-p_3)}]\sqrt{\frac{\lambda}{4}}\right\}.
\end{align*}

When $p_1=\frac{1}{2}$ and $p_2=p_3=0$, a channel will have a phase flip error with probability $\frac{1}{2}$ and will stay unchanged with probability $\frac{1}{2}$, corresponding to an optimal $\lambda_\text{max}$ value of $\frac13$. 
This $\lambda_\text{max}$ bounds the distance between $\mc{N}$ and $\mc{D}$ from above for this metric. 
Assume one fits the parameter $\lambda$ from experiments, and obtains the estimation that $\lambda = \frac13$, the channel $\{\sqrt{\frac{3}{4}}I, \sqrt{\frac{1}{12}} X, \sqrt{\frac{1}{12}} Y, \sqrt{\frac{1}{12}}Z \}$ will be believed to be $\tN$.
Then $\tN^{-1} = \mc{D}^{-1}$ will be used to perform error mitigation. 
In~\cref{fig:app_e_val}, we can see that while the actual channel $\mc{N}$ preserves the expectation value of $Z$, the mitigated results are actually worse due to the incorrect assumptions on noise model (see the blue triangles in~\cref{fig:app_e_val}).
Also note that, since $\mc{D}^{-1}$ is non-CP, the outputs $\mc{D}^{-1}\circ\mc{N}(\rho)$ are not valid quantum states anymore. 
In this case the fidelity function is not bounded below 1, thus is no longer a valid metric.
We give further details in~\cref{app:depolarizing}.

\begin{figure}[ht]
\includegraphics[width = .9\linewidth]{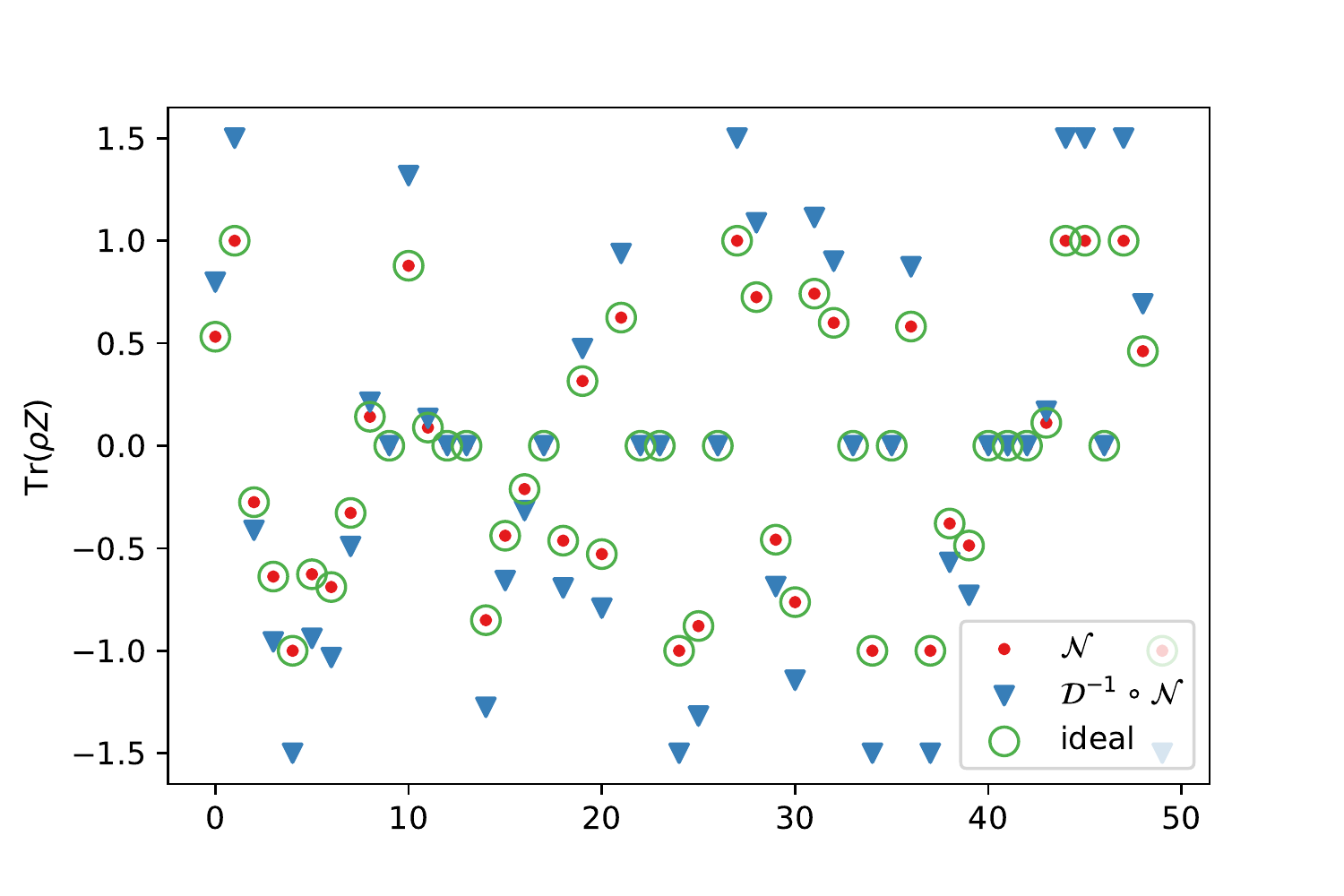}
\caption{Expectation value of $Z$ for 50 randomly generated states. The $x$-axis is the dummy label for these tested states.
}\label{fig:app_e_val}
\end{figure}

\end{example}

While the above is a rather extreme example of channel mismatching, the message in this example is alerting, because it illustrates how a misunderstanding of noise can lead to failures when mitigating errors.
Although we can lower bound the fidelity of the error mitigated state $\rho_\text{EM}$ and $\rho_\text{out}^\text{ideal}$ from Proposition~\ref{prop:fid},
mitigating errors to improve the results still imply a competition between the experiment accuracy and the noise characterization (Proposition~\ref{prop:exp_val} and~\cref{fig:maps}). 
In order to improve experimental readout from EM protocals, the increasing accuracy of the experiments demands better knowledge of device noise, which will translate into expensive procedures and sampling costs on noise characterization.  
The additional overhead due to the need for accurately characterizing noise channels is typically ignored when estimating the cost of QEM~\cite{Takagi2021}, but is nonetheless present for an experimentalist.
These should also be taken into account for future analyses on QEM protocols.

\section{Conclusion and Outlook}
We have examined several aspects of quantum error mitigation in this work.
Using intuitions from classical and quantum communication, we clarify that QEM is fundamentally different from QEC because each method has distinct goals, namely, QEC preserves the physical objects themselves while QEM restores the mathematical descriptions (i.e., density matrices) of the physical objects.
This motivates that inverses of the noise channels play an important role when evaluating the ultimate performance of QEM protocols.
For the case of invertible but CPTP noise where the inverse cannot be physically implemented, we suggest that one may decompose the (HPTP) inverse as multiple implementable CPTP channels, and recombine the results to obtain the inverse using post-processing.

For the case of non-invertible channels, the problem of optimal recovery from the noise is rarely studied in previous literature. 
We explicitly provide a generalized-inverse construction called Drazin-inverse, and prove that the Drazin-inverse of any channel is trace preserving, while another commonly used generalized-inverse called the Moore-Penrose inverse may not be.
Our results provide insights from the point of view of superoperators, opening up new possibilities for applying generalized matrix inverse theory to tackle the above problem. 
The properties of different generalized inverses and when to use them require more investigations.

When the noise channels are invertible, the improvement from EM protocols is constrained by our knowledge about the noise in the device of interest. 
The gap between $\{\tN_i\}$ and $\{\mc{N}_i\}$ can bound the fidelity between the ideal state and the error mitigated state.
A sufficient condition is derived for guaranteeing an improvement, which is motivated from the idea that the accuracy of noise characterization needs to be higher than that of the experiment.
Therefore, more accurate experiments require higher overheads during noise characterization.
We derive a first order approximation to the overall error in multi-layer circuits, which can be used to estimate the minimal cost associated with learning about the noise before applying EM procedures.
We also show that if the mismatch between assumed and real noise channels is too large, error mitigation may yield unphysical outcomes and fail.
Overall, our analyses imply that the complications and subtleties when implementing QEM demand more in-depth future studies on several topics, such as optimal decomposition of inverse maps, and more realistic estimates on the cost of accurate QEM.

\begin{acknowledgments}
N.C. thanks Maxwell Fitzsimmons for helpful discussions.
This research was undertaken thanks in part to funding from the Government of Canada through the Natural Sciences and Engineering Research Council of Canada (NSERC).
\end{acknowledgments}

\bibliography{ErrorMitigationInverse.bib}

\newpage
\onecolumngrid
\appendix

\section{The Drazin-inverse of a TP map is also TP}\label{app:tp}

Here we provide the proof for~\cref{thm:tp}

\begin{proof}

Let the Jordan block of eigenvalue $\lambda$ in $J$ be $J_\lambda$, where $J$ is defined in~\cref{eq:decomp}.

The inverse of a $k$ by $k$ Jordan block $J_\lambda$ ($\lambda\neq 0$) of $\Sp{\mc{N}}$ is 
\begin{align*}
J_\lambda^{-1} & = (\lambda I + N)^{-1} = \lambda^{-1}(I - \lambda^{-1}N +\cdots \lambda^{-(k-1)}N^{(k-1)}) \\
& = \lambda^{-1}\left(\sum_{i=0}^{k-1}(-\lambda^{-1}N)^i\right)=: J'_\lambda 
\end{align*}
where $N$ is the $k$ by $k$ nilpotent matrix, $\lambda$ is the eigenvalue.

By the construction (\cref{eq:inverse}), we have
\begin{equation}\label{eq:jordan}
\Sp{\mc{N}^+}Q = QJ',    
\end{equation}
where $Q$ contains the eigenvectors and generalized eigenvectors of $\Sp{\mc{N}}$.
For the particular block that we concern, the corresponding eigenvector and generalized eigenvector is 
\[Q = 
\begin{pmatrix}
\cdots & v^{g_0} & v^{g_1} &\cdots & v^{g_{k-1}} & \cdots
\end{pmatrix}
\]

Let $e_i, i\in\{0,\cdots,k-1\}$ be the standard basis vectors for this block. 
Acting $e_j$ on both sides of~\cref{eq:jordan}, the left hand side is
\[
\Sp{\mc{N}^+} Q e_j = \Sp{\mc{N}^+} v^{g_j},
\]
and the right hand side is
\[
QJ'e_j = Q\lambda^{-1}\left(\sum_{i=0}^{k-1}(-\lambda)^{-i} N^i e_j\right).
\]
It is easy to show that $N^i e_j = e_{j-i}$ for $j\ge i$, and $N^i e_j = 0 \cdot e_{j}$ for $j< i$. 
Therefore
\[
QJ'e_j = Q\lambda^{-1}\left(\sum_{i = 0}^{j}(-\lambda)^ie_{j-i}\right)
= \lambda^{-1}\left(\sum_{i = 0}^{j}(-\lambda)^iv^{g_{j-i}}\right).
\]
Thus
\begin{equation}\label{eq:ek}
\Sp{\mc{N}^+} v^{g_j} = \lambda^{-1}\left(\sum_{i = 0}^{j}(-\lambda)^iv^{g_{j-i}}\right).
\end{equation}

For $\lambda$ not equals to $1$ and $0$, taking trace on both sides of~\cref{eq:ek}, 
\begin{align*}
\str{\Sp{\mc{N}^+} v^{g_{j}}} = \str{\lambda^{-1}\left(\sum_{i = 0}^{j}(-\lambda)^iv^{g_{j-i}}\right)}.
\end{align*}
From~\cref{lemma1}, we know that $\str{v^{g_{j}}} =0$, so the right hand side is also $0$. That is, $\str{\Sp{\mc{N}^+} v^{g_{j}}} = \str{v^{g_j}}$ holds for every $j\in \{0,\cdots,k-1\}$.

When $\lambda = 1$, according to~\cref{lemma2}, we have the same results except for $j = k-1$. 
Now we check $j = k-1$ case for $\lambda = 1$,
\[
\Sp{\mc{N}^+} v^{g_{k-1}} = \left(\sum_{i=0}^{k-1}(-1)^iv^{g_{k-1-i}}\right).
\]
Taking trace on both sides,
\[
\str{\Sp{\mc{N}^+} v^{g_{k-1}}} = \str{\left(\sum_{i=0}^{k-1}(-1)^iv^{g_{k-1-i}}\right)}
\]
From~\cref{lemma2}, we know that all the eigenvector and generalized eigenvectors have trace zero except for the $(k-1)$th one. We have 
\[
\str{\Sp{\mc{N}^+} v^{g_{k-1}}} = \str{v^{g_{k-1}}}
\]

Finally, when $\lambda = 0$, $J'_\lambda = 0_k$, where $0_k$ is the $k$ by $k$ zero matrix. 
%Thus, $\Sp{\mc{N}^+}Qe_j = Qe_{j-1}$, and $\Sp{\mc{N}^+}v^{g_j} = v^{g_{j-1}}$. 
Thus, $\Sp{\mc{N}^+}v^{g_j} = 0\cdot e_j$.
From~\cref{lemma1}, the trace of both sides are zero. 

Now we have proved that the trace of all columns (eigenvectors and generalized eigenvectors) in $Q$ are unchanged under the action of $\Sp{\mc{N}^+}$. 

Since $Q$ is invertible, any $\Sp{\rho}$ can be expanded by columns $v^{\lambda_{j}}_i$ in $Q$. And we have  
\[\str{\mathbf{v}(\mc{N}^+)(\sum_{ij}a_{ij}v^{\lambda_{j}}_i)} = \str{\sum_{ij}a_{ij}\mathbf{v}(\mc{N}^+)(v^{\lambda_{j}}_i)} =  \str{\sum_{ij}a_{ij}v^{\lambda_{j}}_i},\]
Hence, $\mc{N}^+$ is trace preserving.

\end{proof}

\section{The Effect of Imperfect Knowledge about Noise Channels on Fidelity}\label{app:imperfect}

From the main text, we know that 
\[
\rho_\text{EM} = \mc{U}_{n\cdots 1}\circ \tilde{\mc{R}} (\rho_\text{out}^\text{exp}),
\]
\[
\rho_\text{out}^\text{ideal} = \mc{U}_{n\cdots 1}\circ \mc{R} (\rho_\text{out}^\text{exp}),
\]
where $\mc{U}_{n\cdots1} \coloneqq \mc{U}_n\circ \cdots \circ \mc{U}_1$ is the ideal set of circuits (See~\cref{fig:maps}).

Imperfect knowledge about $\mc{N}_i$ leads to imperfect inverse $\tilde{N}_i^{-1}$. Let $\tilde{N}_i^{-1} = \mc{N}_i^{-1} + \Delta \mc{N}_i^{-1}$, where $\mc{N}_i^{-1}$ is the perfect inverse of $\mc{N}_i$.

Let $\Delta \rho_\text{EM} \coloneqq \rho_\text{EM} - \rho_\text{out}^\text{ideal} $, then
\begin{equation}
\Delta \rho_\text{EM} = \mc{U}_{n\cdots1}\circ\left[\tilde{\mc{R}} - \mc{R}\right](\rho_\text{out}^\text{exp})
= \mc{U}_{n\cdots1}\circ \Delta \mc{N} (\rho_\text{out}^\text{exp})
.
\end{equation}

If we only consider the first order approximation $\Delta \mc{N}^{(1)}$ of $\Delta \mc{N}$, the first order correction term $\Delta \rho_\text{EM}^{(1)}$ would be
\begin{align}\label{eq:delta_rho}
%\begin{equation}
\Delta \rho_\text{EM}^{(1)} & = \mc{U}_{n\cdots1}\circ \Delta \mc{N}^{(1)} \circ \mc{U}_\text{exp} (\rho_\text{in})\\
%&= \mc{U}_{n\cdots2}\circ\left(\sum_{i=1}^{n}\mc{N}_1^{-1}\circ \mc{U}_2^\dag\circ \mc{N}_2^{-1}\circ \cdots\circ \mc{U}^\dag_{i}\circ\Delta \mc{N}_i^{-1}\circ \mc{U}^\dag_{i+1}\circ \cdots \mc{U}^\dag_{n}\circ \mc{N}_{n}^{-1}\right)\circ \mc{U}_\text{exp}(\rho_\text{in})\\
& = \mc{U}_{n\cdots 1}\circ
%\left(\sum_{i=1}^{n}\mc{U}_1^\dag\circ \tN_1^{-1}\circ \mc{U}_2^\dag\circ \tN_2^{-1}\circ \cdots\circ \mc{U}^\dag_{i}\circ\Delta \tN_i^{-1}\circ \mc{U}^\dag_{i+1}\circ \cdots \mc{U}^\dag_{n}\circ \tN_{n}^{-1}\right)\circ \mc{U}_\text{exp}(\rho_\text{in})
\left(\sum_{i=1}^{n}\mc{U}_1^\dag\circ \tN_1^{-1}\cdots\circ \mc{U}^\dag_{i}\circ\Delta \tN_i^{-1}\circ \cdots \mc{U}^\dag_{n}\circ \tN_{n}^{-1}\right)\circ \mc{U}_\text{exp}(\rho_\text{in}),
\end{align}
%\end{equation}
where $\mc{U}_\text{exp}\coloneqq \mc{N}_n\circ \mc{U}_n\circ\cdots \circ \mc{N}_1\circ \mc{U}_1$ is the actual experimental operator. 

Then $\rho_\text{EM} \approx \rho_\text{out}^\text{ideal} + \Delta \rho_\text{EM}^{(1)}$. 
The first order approximation of $F(\rho_\text{EM},\rho_\text{out}^\text{ideal})$, the fidelity between $\rho_\text{EM}$ and $\rho_\text{out}^\text{ideal}$, is
\[
F(\rho_\text{EM},\rho_\text{out}^\text{ideal}) \approx F^{(1)}(\rho_\text{EM},\rho_\text{out}^\text{ideal}) \coloneqq F(\rho_\text{out}^\text{ideal}+ \Delta \rho_\text{EM}^{(1)},\rho_\text{out}^\text{ideal}) = \|\sqrt{\rho_\text{out}^\text{ideal}+ \Delta \rho_\text{EM}^{(1)} }\sqrt{\rho_\text{out}^\text{ideal}}\|_\text{tr}.
\]
By the Fuchs--van de Graaf inequalities, 
\begin{equation}\label{eq:FDG}
[1-D(\Delta \rho_\text{EM}^{(1)})]^2\le F^{(1)}(\rho_\text{EM},\rho_\text{out}^\text{ideal}) \le 1- D^2(\Delta \rho_\text{EM}^{(1)}),
\end{equation}
where $D(\cdot) \coloneqq \frac{1}{2}\|\cdot\|_\text{tr}$ is the trace distance, and $\|\cdot\|_\text{tr}$ is the trace norm. It is also known that $\|A\|_{F} \leq\|A\|_\text{tr} \leq \sqrt{r}\|A\|_{F}$, where $\|A\|_{F}$ is the Frobenius norm which equals to $\|\Sp{A}\|$. The norm $\|\cdot\|$ is the 2-norm.
Since
\[
\|\Delta \rho_\text{EM}^{(1)}\|_{F} = \|\Sp{\Delta \rho_\text{EM}^{(1)}}\| = \|\mathbf{v}(\mc{U}_{n\cdots1})\mathbf{v}(\Delta \mc{N}^{(1)})\mathbf{v}(\mc{U}_\text{exp})\Sp{\rho_\text{in}}\| = \|\mathbf{v}(\mc{U}_{n\cdots1})\mathbf{v}(\Delta \mc{N}^{(1)})\Sp{\rho_\text{out}^\text{exp}}\|,
\]
therefore we can bound $\|\Delta \rho_\text{EM}^{(1)}\|_{F}$ by
\begin{equation}\label{eq:boundF}
l_U\cdot \|\mathbf{v}(\Delta \mc{N}^{(1)})\Sp{\rho_\text{out}^\text{exp}}\| \le \|\Delta \rho_\text{EM}^{(1)}\|_{F}\le \|\mathbf{v}(\mc{U}_{n\cdots1})\|
\cdot\|\mathbf{v}(\Delta \mc{N}^{(1)})\|\cdot\|\Sp{\rho_\text{out}^\text{exp}}\|
\end{equation}
where $l_U \coloneqq \inf_{\|x\| = 1} \|\mathbf{v}(\mc{U}_{n\cdots1}) x\|$ is the lower Lipschitz constant of the superoperator of the ideal circuits. Notice that, on the right hand side of~\cref{eq:boundF}, $\|\mathbf{v}(\mc{U}_{n\cdots1})\|$ and $\|\Sp{\rho_\text{out}^\text{exp}}\|$ are known for a given experiment. Denote $(\|\mathbf{v}(\mc{U}_{n\cdots1})\|\cdot\|\Sp{\rho_\text{out}^\text{exp}}\|)$ as $C_\text{exp}$.
From~\cref{eq:FDG}, we know the fidelity between the mitigated state and the ideal state is bounded by

\[
\left(1 - \frac{1}{2}\sqrt{d} C_\text{exp}\|\mathbf{v}(\Delta \mc{N}^{(1)})\|\right)^2\le F^{(1)}(\rho_\text{EM},\rho_\text{out}^\text{ideal}) \le 1-\frac14 \left(l_U\cdot \|\mathbf{v}(\Delta \mc{N}^{(1)})\Sp{\rho_\text{out}^\text{exp}}\|\right)^2.
\]

%And note that 
In addition, the norm of $\Sp{\Delta \mc{N}^{(1)}}$ satisfies that 
\begin{equation}\label{eq:expending}
\left\|\Sp{\Delta \mc{N}^{(1)}}\right\| \le 
\prod_{k=1}^n\left\|\Sp{\mc{U}_k^\dag}\right\|\cdot
\sum_{i=1}^{n} \left[\left\|\Sp{\Delta \mc{N}_i^{-1}}\right\| \prod_{\substack{j\in\{1,\cdots,n\}\\j\neq i}}\left\|\Sp{\tN_j^{-1}}\right\|\right],
\end{equation}
where $\mc{U}^\dag_k$ and $\tN_j^{-1}$ are known for a given EM tasks. The error on each inverse $\Delta \mc{N}_i^{-1}$ can be exposed at the lower bound of the fidelity. And by counting the sampling cost of getting $\Delta\mc{N}^{-1}$, one can bound the fidelity from the sampling cost.

\section{A Sufficient Condition on Improving Expectation Values}\label{app:exp_val}

The goal of error mitigation on the expectation value of an observables $A$ is 
\begin{equation}\label{eq:em_val}
|\Tr(\rho_\text{EM} A) - \Tr(\rho_\text{out}^\text{ideal}A)|
\le|\Tr(\rho_\text{out}^\text{ideal} A) - \Tr(\rho_\text{out}^\text{exp} A)|.
\end{equation}

The left hand side of~\cref{eq:em_val} is 
\begin{equation}\label{eq:em_val_diff}
\left|\Tr[(\rho_\text{EM}  - \rho_\text{out}^\text{ideal})A]\right|
= \left|\Tr(\Delta \rho_\text{EM} A)\right| 
= \left|\Tr(\mc{U}_{n\cdots1}\circ \Delta \mc{N} (\rho_\text{out}^\text{exp}) \cdot A)\right|
%= \left|\Tr(\mc{U}_{n\cdots1}\circ \Delta \mc{N}^{(1)} (\rho_\text{out}^\text{exp}) \cdot A)\right|.
= \left|\left\langle \Sp{\mc{U}^\dag_{n\cdots1}}\Sp{A^\dag},\Sp{\Delta \mc{N}}\Sp{\rho_\text{out}^\text{exp}}\right\rangle\right|.
\end{equation}

The right hand side of~\cref{eq:em_val} equals to 
\begin{align}\label{eq:exp_val_diff}
\left|\Tr[(\rho_\text{out}^\text{ideal} -\rho_\text{out}^\text{exp}) A]\right| 
& = \left|\Tr[(\mc{U}_{n\cdots 1}\circ \mc{R} - \mc{I}) (\rho_\text{out}^\text{exp}) \cdot A]\right|
=\left|\Tr[\mc{U}_{n\cdots 1}\circ [\mc{R} - \mc{U}^\dag_{1\cdots n}] (\rho_\text{out}^\text{exp})
\cdot A]\right| \nonumber \\
& = \left|\left\langle \Sp{\mc{U}^\dag_{n\cdots1}}\Sp{A^\dag},\Sp{\mc{R} - \mc{U}^\dag_{1\cdots n}}\Sp{\rho_\text{out}^\text{exp}}\right\rangle\right|
\end{align}

It is difficult to draw conclusions directly from~\cref{eq:em_val_diff} and~\cref{eq:exp_val_diff} since $\Sp{\Delta \mc{N}}$ and $\Sp{\mc{U}^\dag\mc{N}_{1\cdots n}^{-1} - \mc{U}^\dag_{1\cdots n}}$ can be arbitrary.
However, 
\[
\left|\left\langle \Sp{\mc{U}^\dag_{n\cdots1}}\Sp{A^\dag},\Sp{\Delta \mc{N}}\Sp{\rho_\text{out}^\text{exp}}\right\rangle\right| 
\le \left\|\Sp{\mc{U}^\dag_{n\cdots1}}\Sp{A^\dag}\right\| \left\|\Sp{\rho_\text{out}^\text{exp}}\right\|\left\|\Sp{\Delta \mc{N}}\right\|,
\]

\[
\left|\left\langle \Sp{\mc{U}^\dag_{n\cdots1}}\Sp{A^\dag},\Sp{\mc{R} - \mc{U}^\dag_{1\cdots n}}\Sp{\rho_\text{out}^\text{exp}}\right\rangle\right|
\ge \left\|\Sp{\mc{U}^\dag_{n\cdots1}}\Sp{A^\dag}\right\| \left\|\Sp{\rho_\text{out}^\text{exp}}\right\|\inf_{\|x\| = 1}\left\| \Sp{\mc{R} - \mc{U}^\dag_{1\cdots n}}x \right\|.
\]
Therefore, if 
\[
\left\|\Sp{\Delta \mc{N}}\right\| \le \inf_{\|x\| = 1}\left\| \Sp{\mc{R} - \mc{U}^\dag_{1\cdots n}}x \right\|,
\]
then \cref{eq:em_val} is guaranteed. 
This means that the EM process will improve the expectation value for any observable $A$ and any desired circuit $\mc{U}_{n\cdots 1}$ when the above is satisfied.
This is a stringent requirement, since if $\Sp{\mc{R} - \mc{U}^\dag_{1\cdots n}}$ has a nontrivial null space, then the right hand side equals to 0.
This implies $\tN_i = \mc{N}_i$ for $\forall i\in \{1,\cdots,n\}$, i.e., the estimation on all noise channels $\tN_i$ must be perfect.

\section{Examples of Noise Channel Mismatching}\label{app:depolarizing}

The Kraus representation of $\mc{N}$ and $\mc{D}$ are
\[
\mc{N} : \left\{\sqrt{p_1}I, \sqrt{p_2} X, \sqrt{p_3} Y, \sqrt{(1-p_1-p_2-p_3)}Z\right\};
\]
\[\mc{D} : \left\{\sqrt{1-\frac{3\lambda}{4}}I, \sqrt{\frac{\lambda}{4}} X, \sqrt{\frac{\lambda}{4}} Y, \sqrt{\frac{\lambda}{4}}Z \right\}.
\]

For a given set of $\{p_1,p_2,p_3\}$, what is the optimal $\lambda$ to minimize $\|\mc{N}-\mc{D}\|$ of a chosen norm $\|\cdot\|$?
One approach is to write down a matrix representation of $\mc{N}$ and $\mc{D}$, then solve $\lambda$ by minimizing $\|N-D\|$ for a particular choice of norm.
For different representations and/or norms, the optimization outcome could be different.
The optimal $\lambda$ will bound the distance $\|\mc{N}-\mc{D}\|$ from below for any possible experimental implementation for this particular norm $\|\cdot\|$.

As mentioned in the main text, the two vectors, $\vec n \coloneqq (\sqrt{p_1}, \sqrt{p_2}, \sqrt{p_3}, \sqrt{(1-p_1-p_2-p_3)})$ and $\vec d \coloneqq\left(\sqrt{1-\frac{3\lambda}{4}}, \sqrt{\frac{\lambda}{4}}, \sqrt{\frac{\lambda}{4}}, \sqrt{\frac{\lambda}{4}}\right)$, are also representations for $\mc{N}$ and $\mc{D}$ respectively. Since $\vec n$ and $\vec d$ are normalized, minimizing the distance between $\mc{N}$ and $\mc{D}$ is equivalent to maximizing $\vec n \cdot \vec d$. i.e. 
\[\max_{\lambda\in [0,1]} \left\{\sqrt{p_1 (1-\frac{3\lambda}{4})} + [\sqrt{p_2}+\sqrt{p_3}+\sqrt{(1-p_1-p_2-p_3)}]\sqrt{\frac{\lambda}{4}}\right\}.
\]

This can be solved by taking the derivative of the expression, and setting it to be zero.
The result is 
\begin{equation}\label{eq:lambda_max}
\lambda_{max} = \frac{[\sqrt{p_2}+\sqrt{p_3}+\sqrt{(1-p_1-p_2-p_3)}]^2 p_1}{\frac94 p_1^2 + \frac34 p_1 [\sqrt{p_2}+\sqrt{p_3}+\sqrt{(1-p_1-p_2-p_3)}]^2}, \text{ or }\lambda = 1,  \text{ or }\lambda = 0.
\end{equation}

\vspace{1cm} 

The superoperators of $\mathcal{N}$ and $\mc{D}$ are
\[\Sp{\mc{N}} = p_1
\begin{pmatrix}
1 & 0 & 0 & 0\\
0 & 1 & 0 & 0\\
0 & 0 & 1 & 0\\
0 & 0 & 0 & 1\\
\end{pmatrix}
+ p_2 
\begin{pmatrix}
0 & 0 & 0 & 1\\
0 & 0 & 1 & 0\\
0 & 1 & 0 & 0\\
1 & 0 & 0 & 0\\
\end{pmatrix}
+p_3 
\begin{pmatrix}
0 & 0 & 0 & 1\\
0 & 0 & -1 & 0\\
0 & -1 & 0 & 0\\
1 & 0 & 0 & 0\\
\end{pmatrix}
+(1-p_1-p_2-p_3)
\begin{pmatrix}
1 & 0 & 0 & 0\\
0 & -1 & 0 & 0\\
0 & 0 & -1 & 0\\
0 & 0 & 0 & 1\\
\end{pmatrix},
\]
\[\Sp{\mc{D}} = (1-\frac{3\lambda}{4})
\begin{pmatrix}
1 & 0 & 0 & 0\\
0 & 1 & 0 & 0\\
0 & 0 & 1 & 0\\
0 & 0 & 0 & 1\\
\end{pmatrix}
+\frac{\lambda}{4} 
\begin{pmatrix}
0 & 0 & 0 & 1\\
0 & 0 & 1 & 0\\
0 & 1 & 0 & 0\\
1 & 0 & 0 & 0\\
\end{pmatrix}
+\frac{\lambda}{4}
\begin{pmatrix}
0 & 0 & 0 & 1\\
0 & 0 & -1 & 0\\
0 & -1 & 0 & 0\\
1 & 0 & 0 & 0\\
\end{pmatrix}
+\frac{\lambda}{4}
\begin{pmatrix}
1 & 0 & 0 & 0\\
0 & -1 & 0 & 0\\
0 & 0 & -1 & 0\\
0 & 0 & 0 & 1\\
\end{pmatrix}.
\]

Even with the optimal $\lambda$ in~\cref{eq:lambda_max}, when $p_2, p_3$ and $1-p_2-p_3$ are not equal to each other, the distance between $\mc{N}$ and $\mc{D}$ is not zero.

The following are two examples of different sets of $\{p_i\}$. 
\begin{enumerate}%[label=(\alph*)]
	\item When $p_1 = p_3 =  0$ and $p_2 = 1$, the optimal $\lambda_\text{max}$ is $1$. Therefore
	\[
	\Sp{\mc{N}} =
	\begin{pmatrix}
	0 & 0 & 0 & 1\\
	0 & 0 & 1 & 0\\
	0 & 1 & 0 & 0\\
	1 & 0 & 0 & 0\\
	\end{pmatrix}, \text{ }
	\Sp{\mc{D}}  =\frac12
	\begin{pmatrix}
	1 & 0 & 0 & 1\\
	0 & 0 & 0 & 0\\
	0 & 0 & 0 & 0\\
	1 & 0 & 0 & 1\\
	\end{pmatrix}
	\]
	
	In this case, the estimated $\mc{D}$ is non-invertible while $\mc{N}$ is invertible. 
	Any generalized inverse $\mc{D}$ will definitely worsen the outcomes. 
	
	\item When $p_1 = \frac12$ and $p_2 = p_3 = 0$, we have $\lambda_\text{max} = \frac13$ according to~\cref{eq:lambda_max}. 
	\begin{figure}[ht!]
	\centering
		\begin{subfigure}{.49\linewidth}
		\includegraphics[width = .9\linewidth]{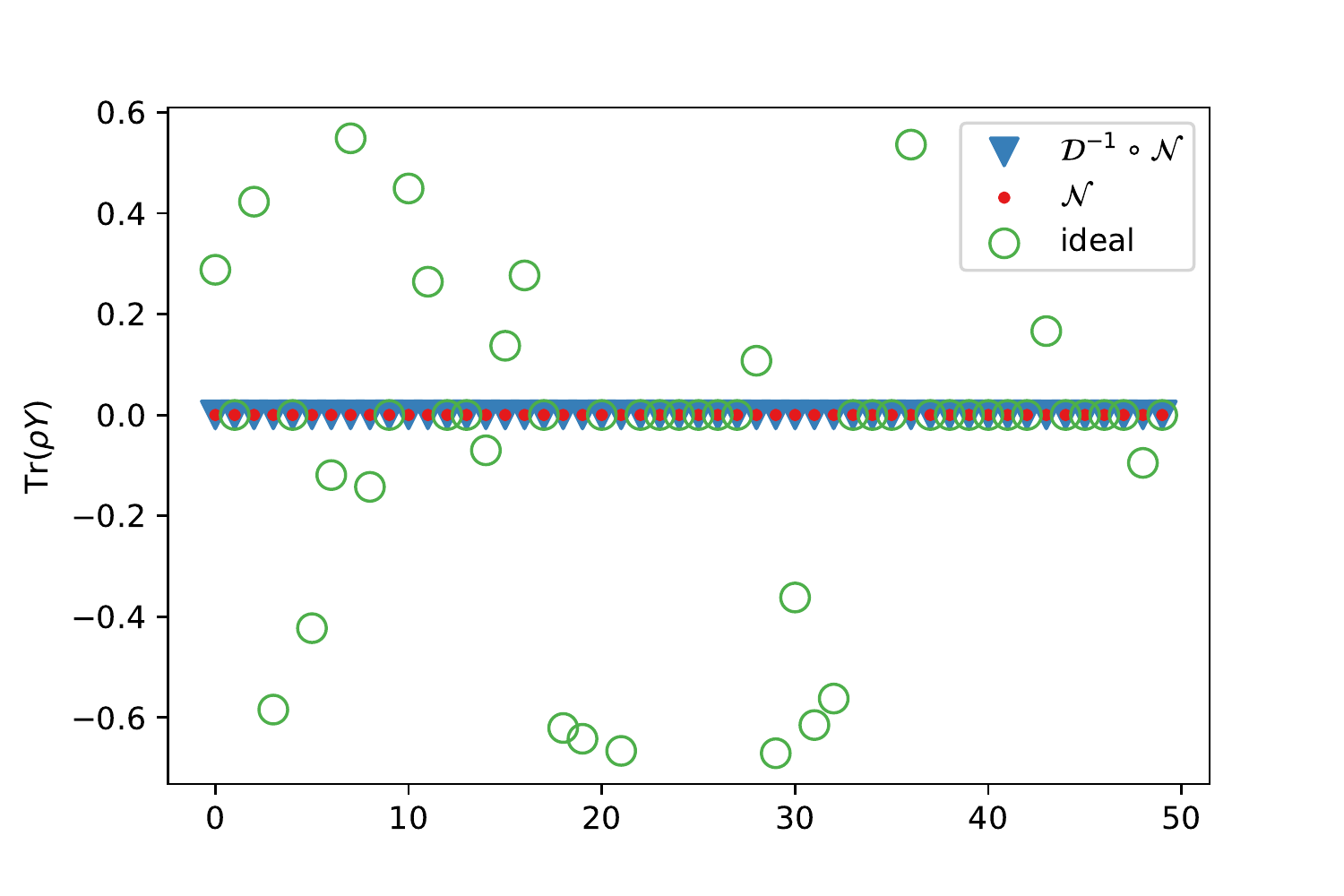}
		\caption{Expectation values of Pauli $Y$ for 50 randomly generated states.}\label{fig:appendixE_Y}
		\end{subfigure}
		\begin{subfigure}{.49\linewidth}
		\includegraphics[width = .9\linewidth]{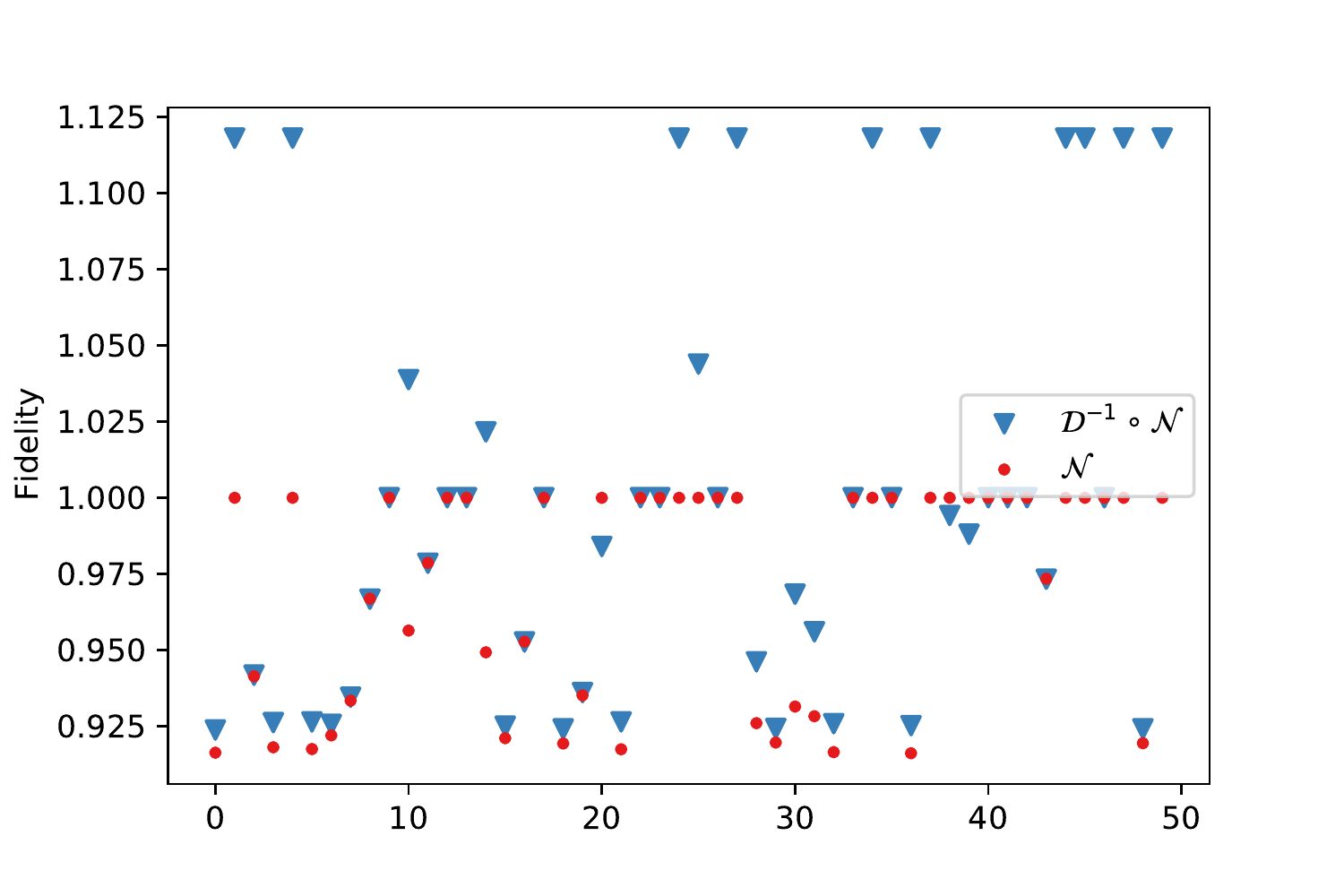}
		\caption{Fidelities of 50 randomly generated states.}\label{fig:app_e_fid}
		\end{subfigure}
	\caption{
		Effects of applying the mismatched noise channel, $\mc{D}^{-1}$, to 50 randomly generated noisy output states.
		The $x$-axis is a dummy label for the tested states. 
		Because the channel $\mc{D}^{-1}\circ\mc{N}$ is not physical (not CP), the ``mitigated'' outputs $\mc{D}^{-1}\circ\mc{N}(\rho)$ are not valid quantum states. In this case the fidelity is no longer a good metric for distinguishing two ``states''.}\label{fig:app_e} 
	\end{figure}

	\[
	\Sp{\mc{N}} =
	\begin{pmatrix}
	1 & 0 & 0 & 0\\
	0 & 0 & 0 & 0\\
	0 & 0 & 0 & 0\\
	0 & 0 & 0 & 1\\
	\end{pmatrix}, \text{ }
	\Sp{\mc{D}} = \frac16
	\begin{pmatrix}
	5 & 0 & 0 & 1\\
	0 & 4 & 0 & 0\\
	0 & 0 & 4 & 0\\
	1 & 0 & 0 & 5\\
	\end{pmatrix}
	\]
	
	The inverse of $\mc{D}$ is 
	\[\Sp{\mc{D}^{-1}} =\frac14
	\begin{pmatrix}
	5 & 0 & 0 & -1\\
	0 & 6 & 0 & 0\\
	0 & 0 & 6 & 0\\
	-1 & 0 & 0 & 5\\
	\end{pmatrix}.
	\]
	Therefore,
	\[
	\Sp{\mc{N}^\text{re}} \coloneqq \Sp{\mc{D}^{-1}}\Sp{\mc{N}} = \frac14
	\begin{pmatrix}
	5 & 0 & 0 & -1\\
	0 & 0 & 0 & 0\\
	0 & 0 & 0 & 0\\
	-1 & 0 & 0 & 5\\
	\end{pmatrix}
	\]
	This resulting channel $\mc{N}^\text{re}$ has eigenvalues $\{\frac32,1,0,0\}$, which will worsen the outcome.
	In~\cref{fig:app_e}, we tested 50 randomly generated quantum state $\rho$ for this example.
	\cref{fig:appendixE_Y} shows the expectation value of Pauli $Y$ for these 50 states.
	Since the expectation value $\Tr(Y\rho)$ is erased by the noise channel $\mc{N}$, $\mc{D}^{-1}$ has no effect on improving $\Tr(Y\rho)$.
	In comparison, for $\Tr(Z\rho)$ in~\cref{fig:app_e_val}, the channel $\mc{D}^{-1}$ has made the outcome worse.
	\cref{fig:app_e_fid} shows the fidelities $F(\mc{N}(\rho),\rho)$ and $F(\mc{N}^\text{re}(\rho),\rho)$. Since $\mc{D}^{-1}$ is non-CP, the outputs $\mc{D}^{-1}\circ\mc{N}(\rho)$ are not valid quantum states anymore. The fidelity function does not always smaller than 1, thus is no longer a good metric.
	This explains why the recovery $\mc{D}^{-1}$ does not improve any expectation value but seems to have higher fidelities.

\end{enumerate}

\end{document}